\newtheorem{theorem}{Theorem}
\newtheorem{lemma}{Lemma}
\newtheorem{definition}{Definition}
\newcommand{\remove}[1]{}
\newcommand{\sweep}{\textbf{SWEEP} }
\newcommand{\chern}{\emph{Chernoff Bound} }
\newcommand{\surn}{\emph{surrounding neighborhood} }
\newcommand{\FourNeigh}{4\!-\!Neighbors }
\newcommand{\EightNeigh}{8\!-\!Neighbors }
\begin{document}

\title{The Cooperative Cleaners Problem in Stochastic Dynamic Environments}

\author{Eyal Regev [1], Yaniv Altshuler [2], Alfred M.Bruckstein [1]\\
{\ [1]} \\ Technion --- IIT \\
\texttt{\{regeve,freddy\}@cs.technion.ac.il}
\\
{\ [2]} \\\ MIT Media Lab \\
\texttt{yanival@media.mit.edu}} 
%

\begin{abstract}
In this paper we study the strengths and limitations of collaborative teams of simple agents. 
In particular, we discuss the efficient use of ``ant robots'' for covering a connected region 
on the $\mbox{\bf Z}^{2}$ grid, whose area is unknown in advance and which expands stochastically. 
Specifically, we discuss the problem where an initial connected region of $S_0$ boundary tiles 
expand outward with probability $p$ at every time step. On this grid region a group of $k$ 
limited and simple agents operate, in order to clean the unmapped and dynamically expanding 
region. A preliminary version of this problem was discussed in~\cite{dCC,new-IJRR}, involving 
a deterministic expansion of a region in the grid.In this work we extend the model and examine
cases where the spread of the region is done stochastically, where each tile has some probability 
$p$ to expand, at every time step. For this extended model we obtain an analytic probabilistic lower 
bounds for the minimal number of agents and minimal time required to enable a collaborative 
coverage of the expanding region, regardless of the algorithm used and the robots' hardware 
and software specifications. In addition, we present an impossibility result, for a variety 
of regions that would be impossible to completely clean, regardless of the algorithm used. 
Finally, we validate the analytic bounds using extensive empirical computer simulation results.\end{abstract}

\maketitle

\section{Introduction}

In this work we discuss in this work is the
\emph{Cooperative Cleaners} problem~---~a problem assuming a
regular grid of connected rooms (pixels), some of which are
`dirty' and the `dirty' pixels forming a connected region of the
grid. On this dirty grid region several agents operate, each
having the ability to `clean' the place (the `room', `tile',
`pixel' or `square') it is located in. We examine problems in which the agents work
in stochastic dynamic environments~---~where probabilistic changes in the environment 
may take place and are independent of, and certainly not caused by, the agents'
activity.
In the spirit of~\cite{Intro6} we consider simple robots with
only a bounded amount of memory
(i.e.~\emph{finite-state-machines}).

The static variant of this problem was introduced in \cite{CoopCleaner}, where a
cleaning protocol ensuring that a decentralized group of agents will jointly clean any given (and a-priori
unknown) dirty region. The protocol's performance, in terms of
cleaning time was fully analyzed and also demonstrated experimentally.

A dynamic generalization of the problem was later presented in \cite{new-IJRR}, in which a deterministic
expansion of the environment is assumed, simulating the spreading of a \emph{contamination} 
(or a spreading "danger" zone or \emph{fire}). Once again,
the goal of the agents is to clean the spreading contamination
in as efficiently as possible.

In this work we modify the 'dirty' region expansion model and add stochastic features to the spreadings of the region's cells. We formally define and analyze the \emph{Cooperative Cleaning} problem, this time under this stochastic generalization. 

We focus on a variant of the \emph{Cooperative Cleaning} problem, where the tiles have some probability to be contaminated by their  neighbors' contamination. This version of the problem has applications in the ``real'' world and in the computer network environments as well. For instance, one of the applications can be a distributed anti-virus software trying to overcome an epidemic malicious software attacking a network of computers. In this case, each infected computer has some probability to infect computers connected to it.


A more general paradigm of the cleaning problem is when the transformation of the contaminated area from one state to another is described in the form of some pre-defined function. For instance, following the previous example, we can say that the sub-network affected by the virus, is spreading by a certain rule. We can say that a computer will be infected by a virus with a certain probability, which depends on the number of the neighboring computers already infected. By defining rules for the contamination's spreading and cleaning, we can think of to this problem as a kind of Conway's ``\emph{Game of Life}'', where each cell in the game's grid spreads its ``seed'' to neighboring cells (or alternatively, ``dies'') according to some basic rules. 

While the problem posed in \cite{new-IJRR}, as well as the analysis methods used and the correctness proofs, were all deterministic, it is interesting to examine the stochastic variant of such algorithms. In this work we analyze and derive a lower bound on the expected cleaning time for $k$ agents running a cleaning protocol under a model, where every tile in the neighborhood of the affected region may become contaminated at every time step with some probability, the contamination coming from its dirty neighbors.

%

\section{Organization}
The main contributions of this paper are a bound on the contaminated region's size and a bound on the cleaning time which is presented in Section~\ref{l-bound}. We also present a method which bounds the cleaning time for a given desired probability. We then provide an impossibility result for the problem raised in Section~\ref{i-result}. The rest of paper is organized as followed: In Section~\ref{r-work} we survey some of the related works. In Section~\ref{b-def} we formalize the problem, giving the basic definitions needed for the later analysis, in Section~\ref{e-results} we present some of the experimental results and compare them to the analytic bounds and concluding in Section~\ref{conclu}.

\section{Related Work}
\label{r-work}
Significant research effort is invested in the design and simulation of
multi-agent robotics and intelligent swarm systems (see e.g.
\cite{Mastellone1,DeLoach1,Chalkiadakis1,GraphSearch1,CoopRobot1,AMAI_Intro}).

In general, most of the techniques used for the distributed coverage of some region are based on some sort of cellular decomposition. For example, in~\cite{primitive_static_cleaning} the area to be covered is divided between the agents based on their relative locations. In~\cite{butler1} a different decomposition method is being used, which is analytically shown to guarantee a complete coverage of the area. \cite{choset2} discusses two methods for cooperative coverage (one probabilistic and the other based on an exact cellular decomposition).

While some existing works concerning distributed (and decentralized) coverage present analytic proofs for the ability of the system to complete the task (for example, in~\cite{choset2,butler1,batalin1}), most of them lack analytic bounds for the coverage time (and often extensive amounts of empirical results on this are made available by extensive simulations). Although a proof for the coverage completion is an essential element in the design of a multi-agent system, analytic indicators for its efficiency are in our opinion of great importance. We provide such results, as bounds for the cleaning time of the agents, in Section~\ref{l-bound}.

An interesting work to mention in this context is that of Koenig and his collaborators~\cite{Svennebring1,similar_terrain_coverage}, where a swarm of ant-like robots is used for repeatedly covering an unknown area, using a real time search method called \emph{node counting}. By using this method, the robots are shown to be able to efficiently perform a coverage mission, and analytic bounds for the coverage time are discussed.

Another work discussing a decentralized coverage of terrains is presented in \cite{Zheng2}. This work examines domains with non-uniform traversability. Completion times are given for the proposed algorithm, which is a generalization of the forest search algorithm. In this work, though, the region to be searched is assumed to be known in advance~-~a crucial assumption for the search algorithm, which relies on a cell-decomposition procedure.

\emph{Vertex-Ant-Walk}, a variation of the node counting algorithm is presented in \cite{GraphSearch1} and is shown to achieve a coverage time of $O(n \delta_{G})$, where $\delta_{G}$ is the graph's diameter, which is based on a previous work in which a cover time of $O(n^{2} \delta_{G})$ was demonstrated \cite{Thrun-coverage}.
Another work called \emph{Exploration as Graph Construction}, provides a coverage of degree bounded graphs in $O(n^{2})$ time, is described in \cite{Exploration-Dudek}. Here a group of ant robots with a limited capability explores an unknown graph using special ``markers''.

Similar works concerning multi agents systems may be found
in~\cite{survailance2,similar2,similar1,Bejar1,primitive_static_cleaning,2010CSMAS,choset2,butler1,min1,batalin1}).

The \emph{Cooperative Cleaning} problem is also strongly related to the problem of distributed search after mobile and evading target(s) \cite{Koenig-movingtargets,Koenig-pursuits-complexity,UAV-ROBOTICA, ICINCO-UAV} or the problems discussed under the names of ``Cops and Robbers'' or ``Lions and Men'' pursuits \cite{Lions-Isaacs,lions-1992,Lions-Isler,Lions-Goldstein,Lions-Flynn,lion-klein}.

\section{Definitions}
\label{b-def}

In our work we will use some of the basic notations and definitions, that were used in~\cite{new-IJRR}, which we shell briefly review. As in the above mentioned previous works on this problem, we shall assume that the time is discrete.  

\begin{definition}
Let an undirected graph $G(V,E)$ describe the two dimensional integer grid $\mbox{\bf Z}^{2}$, whose vertices (or ``\emph{tiles}'') have a binary property called ``\emph{contamination}''. Let
$cont_{t}(v)$ denote the contamination state of the tile $v$ at time $t$, taking either the value
``\emph{on}'' (for ``dirty'' or ``contaminated'') or ``\emph{off}'' (for ``clean'').
\end{definition}

For two vertices $v,u \in V$, the edge $(v,u)$ may belong to $E$ at time $t$ only if both of the
following hold~:
\begin{inparaenum}[\itshape a\upshape)]
  \item $v$ and $u$ are $\FourNeigh$ in $G$.
  \item $cont_{t}(v) = cont_{t}(u) = on$.
\end{inparaenum}
This however is a necessary but not a sufficient condition as we elaborate below.

The edges of $E$ represent the connectivity of the contaminated region. At $t=0$ all the
contaminated tiles are connected, namely~:
\[ (v,u) \in E_{0} \iff
(v,u \emph{ are $\FourNeigh$ in $G$}) \wedge (cont_{0}(v) =
cont_{0}(u) = on)\]

Edges may be added to $E$ only as a result of a contamination
spread and can be removed only while contaminated tiles are
cleaned by the agents.

\begin{definition}
Let $F_{t}(V_{F_{t}},E_{t})$ be the contaminated sub-graph of $G$ at time $t$, i.e.~:
\[ V_{F_{t}} = \left\{v \in G \ | \ cont_{t}(v) = on \right\} \]
\end{definition}

We assume that $F_{0}$ is a single simply-connected component (the actions of the agents will be so
designed that this property will be preserved).

\begin{definition}
Let $\partial F$ denote the boundary of $F$. A tile is on the boundary if and
only if at least one of its $\EightNeigh$ is not in $F$, meaning~:
\begin{displaymath}
\partial F = \{v \ | \ v \in F \ \wedge \ \EightNeigh(v) \ \cap \
(G \ \setminus \ F) \ \neq \ \emptyset\}
\end{displaymath}
\end{definition}

\begin{definition} \label{def.st} Let $S_{t}$ denote the size of the dirty region $F$ at time $t$, namely the number of grid
points (or tiles) in $F_{t}$.
\end{definition}
Let a group of $k$ agents that can move on the grid $G$ (moving from a tile to its neighbor in one
time step) be placed at time $t_{0}$ on $F_{0}$, at some point $p_{0} \in V_{F_{t}}$.

\begin{definition}
	Let us denote by \( \Delta{F_t} \) the \emph{potential boundary}, which is the maximal number of tiles which might be added to $F_t$ by spreading all the tiles of $\partial{F_t}$.
	\[ \Delta{F_t} \equiv \left\{v:\exists{u}\in\partial{F_t}\mbox{ and }v \in \FourNeigh(u)\mbox{ and }v\notin{F_t}\right\} \]
\end{definition}

As we are interested in the stochastic generalization of the dynamic cooperative cleaners model, we will assume that each tile in \( \Delta{F_t} \) might be contaminated with some probability $p$. In the model we will analyze later, we assume that the status variables of the tiles of \( \Delta{F_t} \) are independent from one another, and between time steps. 


\begin{definition}
	Let us denote by $\Phi_n\left(v\right)$ the \surn of a tile $v$, as the set of all the reachable tiles $u$ from $v$ within $n$ steps on the grid (namely, the ``digital sphere'' or radius $n$ around $v$). In this work we assume 4-connectivity among the region cells~---~namely, two tiles are considered as neighbors within one step iff the \emph{Manhattan distance} between them is exactly 1.
\end{definition}

The spreading policy, $\Xi\left(v\mbox{, }\phi\mbox{, }t\right)$, controls the contamination status of $v$ at time $t+1$, as a function of the contamination status of its neighbors in its $n$-th digital sphere, at time $t$. Notice the $\Xi\left(v\mbox{, }\phi\mbox{, }t\right)$ can be also non deterministic.

\begin{definition}
	Let us denote by $\Xi\left(v\mbox{, }\phi\mbox{, }t\right)$ the \emph{spreading policy} of $v$ as follows:
	$$ \Xi\left(v\mbox{, }\phi\mbox{, }t\right):\left(V,\ \left\{ \mbox{On, Off} \right\}^{\alpha}, \ \mathbb{N}\right)\rightarrow \left\{ \mbox{On, Off} \right\} $$
	Where $\alpha \equiv \left| \Phi_n\left(v\right) \right|$, $V$ denotes the vertices of the grid, $\left\{ \mbox{On, Off} \right\}^{\alpha}$ denotes the contamination status of the members of $\Phi_n\left(v\right)$, at time $t$ (for $t \in N$).
\end{definition}

A basic example of using the previous definition of $\Xi\left(\right)$ is the case of the \emph{deterministic model}, where at every $d$ time-steps the contamination spreads from all tiles in $\partial{F_t}$ to all the tiles in $\delta{F_t}$. This model can be defined using the $\Xi$ function, as follows:

For every tile $v$ we first define $\Phi_n\left(v\right)$ where $n$ equals to 1 and assuming 4-connectivity.
Then $\Xi\left(v\mbox{, }\phi\mbox{, }t\right)$ , for any time-step $t$ will be defined as follows:
$$\Xi\left(v\mbox{, }\phi\mbox{, }t\right) = \left\{
\begin{array}{l l}
\mbox{On} & \quad \mbox{if } t\mod d = 0 \mbox{ and } v \in \Delta{F_t}\\
\mbox{Off} & \quad \mbox{Otherwise}\\
\end{array} \right. $$
Notice that due to the fact that we are assuming that $v$ is in $\Delta{F_t}$, its \surn contains at least one tile with contamination status of \emph{On}.

An interesting particular case of the general $\Xi()$ function is the \emph{simple uniform probabilistic spread}. In this scenario, a tile in $V \in \Delta{F_t}$ becomes contaminated with some predefined probability $p$, if and only if at least one of its $n$-th neighbors are contaminated at time step $t$. Using the $\Xi()$ function and the probability $p$, this can be formalized as follows~:

For every tile $v$ we first define $\Phi_n\left(v\right)$ where $n$ equals to 1 and assuming 4-connectivity.
$$\Xi\left(v\mbox{, }\phi\mbox{, }t\right) = \left\{
\begin{array}{l l}
\mbox{On} & \quad \mbox{with probability of } p \mbox{ if } v \in \Delta{F_t}\\
\mbox{Off} & \quad \mbox{Otherwise}\\
\end{array} \right. $$

This model can naturally also be defined simply as :
\[ \forall t \in \mathbb{N} \mbox{, } \forall v \in \Delta{F_t} \mbox{, }	Prob\left(cont_{t+1}(v) = On\right) = p \]

In our work we will focus on this model, while deriving the analytic bounds for the cleaning time.

\section{Lower Bound}
\label{l-bound}

\subsection{Direct Bound}
\label{d-bound}

In this section we form a lower bound on the cleaning time of any cleaning protocol preformed by $k$ agents. We start by setting a bound on the contaminated region's size at each time step, $S_t$. As we are interested in minimizing the cleaning time we should also minimize the contaminated region's area. Therefore we are interested in the minimal size of it, which achieved when the region's shape is sphere \cite{new-IJRR}. In our model each tile in the \emph{potential boundary}, $\Delta{F_t}$, has the same probability $p$ to be contaminated in the next time step. The whole probabilistic process in each time step is \emph{Binomial Distributed}, under the assumption that the spreading of each tile at any time step is independent from the spreadings of other tiles or from the spreadings of the same tile at different time steps. 

As we are interested in the lower bound of the contaminated region's size we will assume that the expected number of newly added tiles to the contaminated region is minimal, which occurs when the region's shape forms a digital sphere (as presented in \cite{DGCI}). Then we can compute the expectation of this process for a specific time step $t$. Therefore, the size of the \emph{potential boundary} is $\Delta{F_t} = 2\sqrt{2 \cdot S_t - 1}$ as shown in \cite{DGCI,Daniel}.

\begin{definition}
\label{def-xt}
Let us denote by $X_t$ the random variable of the actual number of added tiles to the contaminated region at time step $t$.
\end{definition}

Assuming the independence of tiles' contamination spreadings and given $S_t$, $X_t$ is  \emph{Binomial Distributed}, $X_t|S_t \sim B\left( \Delta{F_t},p\right)$, where each tile in the \emph{potential boundary} has the same probability $p$ to be contaminated. Therefore, we can say that the expectation of $X_t$ given $S_t$ is $\mu = E\left(X_t | S_t \right) = p \cdot \Delta{F_t}$.

Notice that occasionally the number of new tiles added to the contaminated region may be below $\mu$. As we are interested in a lower bound, we should take some $\mu ' < \mu$  such that: $Pr[X_t < \mu ' | S_t] < \epsilon$, meaning that the probability that the number of the newly added tiles to the contaminated region is smaller than $\mu '$ is extremely small (tends to 0).

In order to bound $X_t$ by some $\mu '$ we shall use the \chern, where $\delta$ is the desired distance from the expectation, as follows~:
$$ Pr\left[X_t < (1 - \delta) \mu | S_t \right] < e^{-\frac{\delta^2 \mu}{2}}$$

\begin{definition}
\label{def-qt}
Let us denote by $q_t$ the probability that at time step $t$, the size of the added tiles to the contaminated region is not lower than $\mu'=(1-\delta)\mu$ and it can be written as follows~:
$$ q_t = (1- Pr\left[X_t < (1 - \delta) \mu | S_t\right]) $$
\end{definition}

\begin{theorem}
Using any cleaning protocol, the area of the contaminated region at time step $t$ can be recursively
lower bounded, as follows~:
\label{theorem.DynamicProbLower.LowerBound} %
\begin{displaymath}
Pr \left[S_{t+1} \geq S_{t} - k + \left\lfloor 2 \cdot \left( 1-\delta \right) p \cdot \sqrt{2 \cdot (S_{t} - k) - 1} \right\rfloor | S_t\right] \geq q_t
\end{displaymath}
\end{theorem}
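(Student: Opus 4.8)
The plan is to prove the inequality for an \emph{arbitrary} cleaning protocol by analyzing the single transition $t \to t+1$ and, at each point, granting the agents the most advantageous situation. The bound is assembled from three facts: (i) how much the $k$ agents can shrink $F_t$ in one step, (ii) the digital isoperimetric estimate for the size of the potential boundary, and (iii) the \chern applied to the binomial spreading of $\Delta F_t$. For (i), in one time step each of the $k$ agents sits on (and may clean) at most one tile, so at most $k$ contaminated tiles disappear before the stochastic spread occurs; writing $S_t'$ for the size of the contaminated region immediately after the agents act, $S_t' \geq S_t - k$ (we may assume $S_t > k$, else the statement is vacuous). By Definition~\ref{def-xt} the spread then adds $X_t$ further tiles, so $S_{t+1} = S_t' + X_t \geq (S_t - k) + X_t$, and it remains to lower bound $X_t$ on an event of probability $q_t$.

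For (ii), the size of the potential boundary of the post-cleaning region is, by the isoperimetric result cited above (\cite{DGCI,Daniel}), minimized when that region is a digital sphere, so $\Delta F_t \geq 2\sqrt{2 S_t' - 1} \geq 2\sqrt{2(S_t-k)-1}$, using monotonicity of $2\sqrt{2S-1}$ in $S$. For (iii), conditioned on $S_t$ and on the agents' configuration, each tile of $\Delta F_t$ becomes contaminated independently with probability $p$, so $X_t\,|\,S_t \sim B(|\Delta F_t|, p)$; since a binomial is stochastically increasing in the number of trials, it suffices to establish the bound in the extremal case $|\Delta F_t| = 2\sqrt{2(S_t-k)-1}$, where $\mu = E[X_t\,|\,S_t] = 2 p \sqrt{2(S_t-k)-1}$. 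Applying the \chern, $Pr[X_t < (1-\delta)\mu \,|\, S_t] < e^{-\delta^2\mu/2}$, so with probability at least $q_t = 1 - Pr[X_t < (1-\delta)\mu \,|\, S_t]$ we have $X_t \geq (1-\delta)\mu = 2(1-\delta) p \sqrt{2(S_t-k)-1}$; as $X_t$ is integer-valued this gives $X_t \geq \lfloor 2(1-\delta) p \sqrt{2(S_t-k)-1} \rfloor$. Combining with step (i) yields $S_{t+1} \geq S_t - k + \lfloor 2(1-\delta) p \sqrt{2(S_t-k)-1} \rfloor$ on an event of probability at least $q_t$, which is the assertion.

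I expect the delicate point to be the ``worst case over protocols'' step: justifying rigorously that the actual (protocol-dependent, possibly randomized) region left after cleaning may be replaced, for the purpose of a lower bound, by the minimal digital sphere of size $S_t - k$, and arranging the conditioning so that the \chern genuinely applies to the conditional law $X_t\,|\,S_t$. One must also treat the boundary case $2(S_t-k)-1 < 0$ (the region is essentially already cleaned) and note the mild circularity that $q_t$ is itself defined through the lower tail of $X_t$, whose trial count we are simultaneously bounding — which is why phrasing step (iii) via stochastic domination, rather than by substituting a concrete value of $|\Delta F_t|$, is the safe route.
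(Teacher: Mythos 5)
Your proof is correct and follows essentially the same route as the paper's: grant the agents maximal efficiency ($k$ tiles removed), invoke the digital-sphere isoperimetric bound to minimize $\Delta F_t$ at $2\sqrt{2(S_t-k)-1}$, and apply the \chern to the conditional binomial $X_t\,|\,S_t$ to obtain the event of probability $q_t$. Your version is in fact somewhat more careful than the paper's --- the stochastic-domination step justifying the replacement of $|\Delta F_t|$ by its minimal value, and the explicit monotonicity of $2\sqrt{2S-1}$, are used implicitly but never stated in the original proof.
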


\begin{proof} Notice that a lower bound for the contaminated region's size can be obtained by assuming that the agents are
working with maximal efficiency, meaning that each time step every agent cleans exactly one tile. 

In each step, the agents clean another portion of $k$ tiles, but the remaining contaminated tiles spread their contamination to their $\FourNeigh$ and cause new tiles to be contaminated. 

Lets us denote by the random variable $S_{t+1}$ the number of contaminated tiles in the next time step. Using Definition~\ref{def-xt} we can express $S_{t+1}$ as follows~:
$$ S_{t+1} = S_t - k + X_t$$
Lets first bound the number of the added tiles using the \chern. As $X_t$ given $S_t$ is  \emph{Binomial Distributed}, $X_t | S_t\sim B\left( \Delta{F_t},p\right)$ and $\mu = E\left(X_t| S_t\right) = p \cdot \Delta{F_t}$. Using \chern we know that:
$$ Pr\left[X_t < (1 - \delta) \mu | S_t\right] < e^{-\frac{\delta^2 \mu}{2}} \Rightarrow Pr\left[X_t < (1 - \delta) p \cdot \Delta{F_t}| S_t\right] < e^{-\frac{\delta^2 \cdot p \cdot \Delta{F_t}}{2}} $$
Assigning $X_t = S_{t+1}- S_t +k$ from former definition of $S_{t+1}$, we get:
$$ Pr\left[S_{t+1}- S_t + k < (1 - \delta) p \cdot \Delta{F_t} | S_t\right] < e^{-\frac{\delta^2 \cdot p \cdot \Delta{F_t}}{2}} $$
As we are interested in the \emph{minimal} number of tiles which can become contaminated at this stage. The minimal number of $\FourNeigh$ of any number of tiles is achieved when the tiles are organized in the shape of a ``digital sphere'' (see~\cite{DGCI,Daniel})~-~i.e. the \emph{potential boundary} is $\Delta{F_t} = 2\sqrt{2 \cdot (S_t-k) - 1}$. Assigning $\Delta{F_t}$ value:
$$ Pr\left[S_{t+1} < S_t - k + (1 - \delta) p \cdot 2\sqrt{2 \cdot (S_t-k) - 1} | S_t\right] < e^{-\frac{\delta^2 \cdot p \cdot 2\sqrt{2 \cdot (S_t-k) - 1}}{2}} $$ 
As we are interested in the complementary event and using Definition~\ref{def-qt}
\begin{equation} 
\label{ineq}
Pr\left[S_{t+1} \geq S_t - k + (1 - \delta) p \cdot 2\sqrt{2 \cdot (S_t-k) - 1} | S_t\right] \geq 1- e^{-\frac{\delta^2 \cdot p \cdot 2\sqrt{2 \cdot (S_t-k) - 1}}{2}} = q_t
\end{equation}

As the number of tiles must be an integer value, we use $\left\lfloor (1 - \delta ) \cdot p \cdot 2\sqrt{2 \cdot S_{t} - 1} \right\rfloor$ to be on the safe side. Using inequality~\ref{ineq} we get~:

$$Pr \left[S_{t+1} \geq S_{t} - k + \left\lfloor 2 \cdot \left( 1-\delta \right) p \cdot \sqrt{2 \cdot (S_{t} - k) - 1} \right\rfloor | S_t\right] \geq q_t$$

\end{proof} 

\begin{figure}
\centering
\subfigure[A lower bound for the contaminated region $S_t$, the area at time $t$, for various values of $\delta$] 
{
    \label{subfig.stdeltas}
    \includegraphics[height=7.8cm,width=9cm]{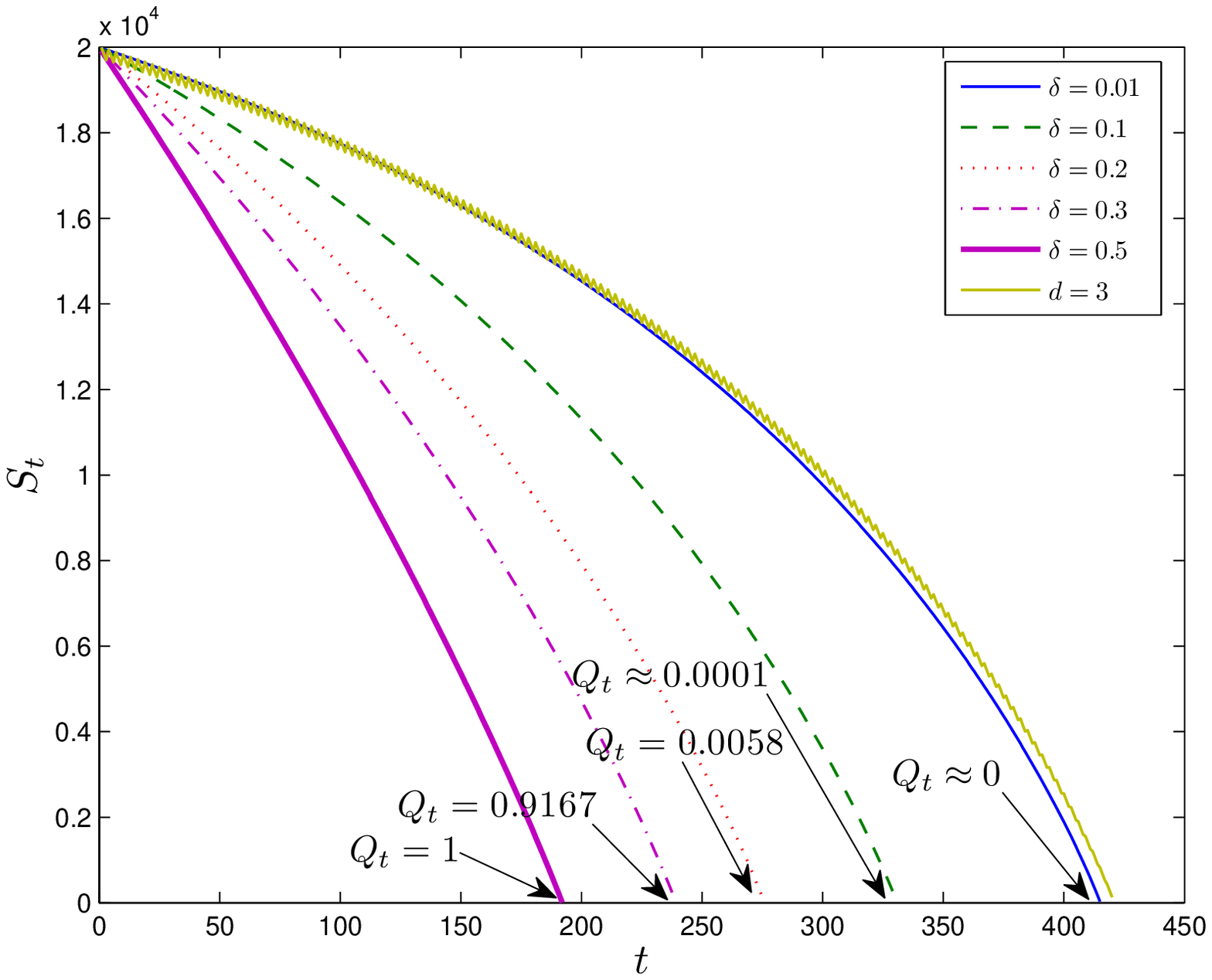}
}
\vspace{0.1cm}
\subfigure[A lower bound for the contaminated region $S_t$, the area at time $t$, for various number of agents $k$] 
{
    \label{subfig.stagents}
    \includegraphics[height=7.8cm,width=9cm]{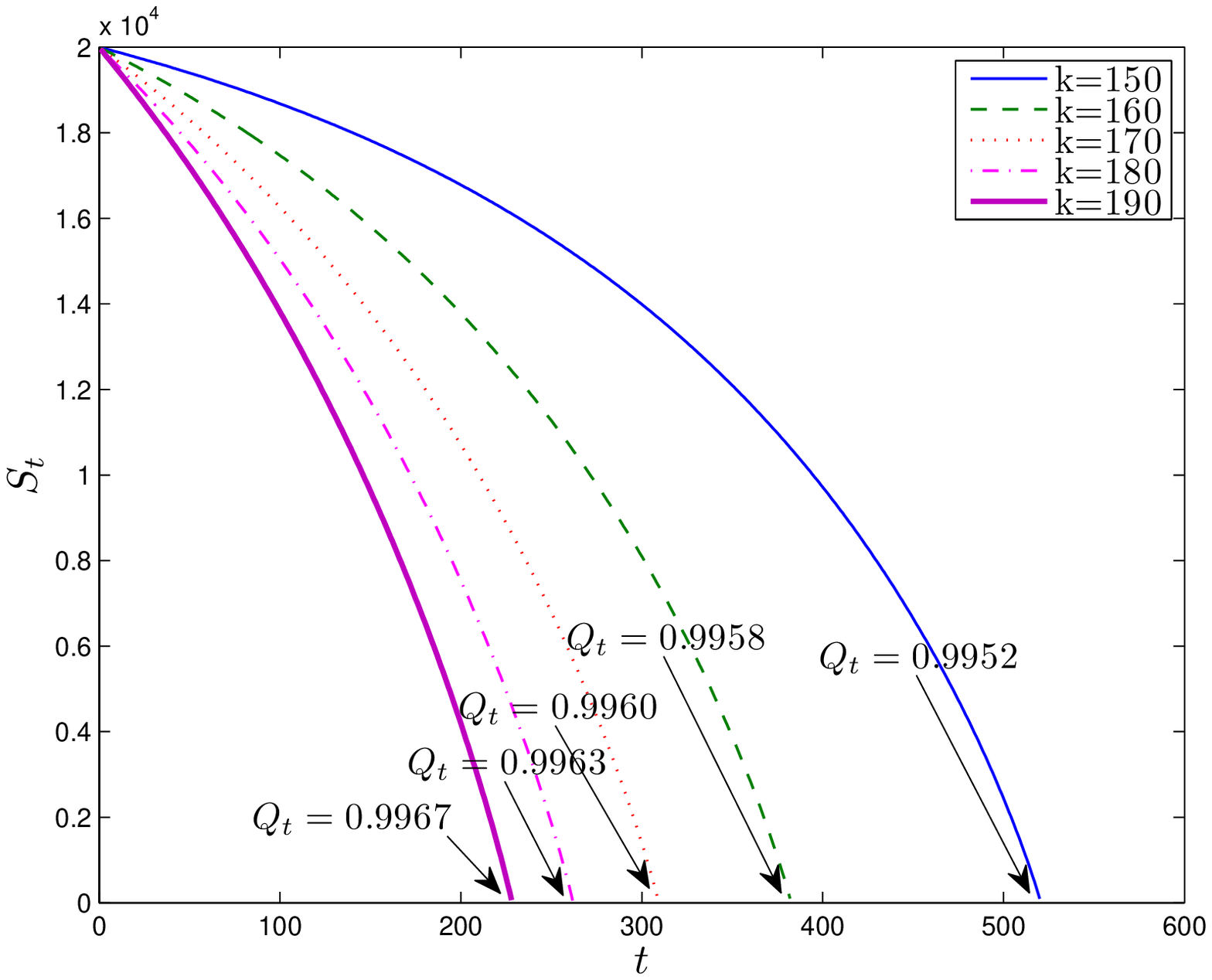}
}
\caption{An illustration of the bound presented in Theorem~\ref{theorem.DynamicProbLower.LowerBound}. In \subref{subfig.stdeltas} we can see the deterministic model (the Zig-Zag line) with spread each 3 time steps $(d=3)$ compare to the stochastic model with $p=1/3$ and $\delta \in \left[ 0.01, 0.1,0.2,0.3,0.5 \right]$ where both models have $k=150$ and start with $S_0=20000$. In \subref{subfig.stagents} we have the lower bound for $S_0 = 20000 \mbox{, } p = 0.5 \mbox{ and } \delta = 0.3$ for different number of agents $k \in \left[150,160,170,180,190\right]$.  }
\label{fig.DynamicProbLower.st}
\end{figure}

Notice that as illustrated in Figure~\ref{subfig.stdeltas}, which demonstrates the bound presented in Theorem~\ref{theorem.DynamicProbLower.LowerBound}, as $\delta$ decreases the produced bound for the stochastic model is closer to the bound of the deterministic model, for $d=\frac{1}{p}$.

\begin{definition}
\label{def.DynamicProbLower.qt}
	Let us denote by $Q_t$ the \emph{bound probability} that the contaminated region's size at time step $t$ will be [at least] $S_t$. $Q_t$ can be expressed as follows~:
	$$Q_t = \prod^{t}_{i=0} q_i$$
\end{definition}

Notice that for bounding the area of the region at time step $t$ using Theorem~\ref{theorem.DynamicProbLower.LowerBound} and Definition~\ref{def.DynamicProbLower.qt}, the bound, which will be achieved, will be in probability $Q_t$. We want to have $Q_t$ sufficiently high. 

We shall assume, for the sake of analysis, that the dynamic value of the area, $S_t$, is always kept not less than some $\hat{S} < S_{0} - k + \left\lfloor 2 \cdot \left( 1-\delta \right) p \cdot \sqrt{2 \cdot (S_{0} - k) - 1} \right\rfloor$ (as we want $S_1$ to be bigger or equal to $\hat{S}$). Then the next Lemma holds~:
\begin{lemma}
\label{lemma.DynamicProbLower.LowerBound}
For any $T \geq 1$, if for all $1 \leq t \leq T$ the contaminated region's size $S_t$ is always kept not less than some $\hat{S} < S_{0} - k + \left\lfloor 2 \cdot \left( 1-\delta \right) p \cdot \sqrt{2 \cdot (S_{0} - k) - 1} \right\rfloor$ then~:
\[
Q_T \geq \hat{Q}_T = \hat{q}^T = \left( 1 - e^{-\delta^2 \cdot p\cdot 2 \sqrt{2\cdot (\hat{S}-k) -1}} \right)^{T} 
\]
\end{lemma}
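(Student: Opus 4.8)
The plan is to unwind $Q_T$ as a product of the single-step ``success'' probabilities $q_i$ and then bound each factor from below by exploiting that $q_i$ depends on the area only through a monotone function. From Definition~\ref{def.DynamicProbLower.qt}, $Q_T$ is a product of terms of the form $q_i$, and from inequality~(\ref{ineq}) in the proof of Theorem~\ref{theorem.DynamicProbLower.LowerBound} each such term can be written as $q_i = g(S_i)$, where $g$ is the fixed function of the area that appears there, namely $g(s) = 1 - e^{-\delta^2 \cdot p \cdot 2\sqrt{2 \cdot (s-k) - 1}/2}$; by construction $\hat{q} = g(\hat{S})$.

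First I would check that $g$ is non-decreasing on the relevant domain $\{\, s : s - k > \tfrac12 \,\}$: the inner map $s \mapsto 2(s-k)-1$ is affine and increasing, $\sqrt{\cdot}$ is increasing on the nonnegatives, and $x \mapsto 1 - e^{-x}$ is increasing, so $g$ is a composition of increasing maps and hence increasing. It is worth remarking here that the hypotheses implicitly require $\hat{S} > k$: otherwise the digital-sphere perimeter estimate $\Delta F = 2\sqrt{2(S-k)-1}$ used throughout Section~\ref{l-bound} would fail to be real, and neither $\hat{q}$ nor the bound of Theorem~\ref{theorem.DynamicProbLower.LowerBound} would be meaningful.

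Given that, the hypothesis $S_t \ge \hat{S}$ for all $1 \le t \le T$ combined with monotonicity of $g$ yields, for each relevant index $t$, the pointwise inequality $q_t = g(S_t) \ge g(\hat{S}) = \hat{q}$. Substituting these lower bounds factor by factor into the product defining $Q_T$ gives $Q_T \ge \hat{q}^{\,T} = \hat{Q}_T$, as claimed. The additional assumption $\hat{S} < S_0 - k + \lfloor 2(1-\delta)p\sqrt{2(S_0-k)-1}\rfloor$ enters only to keep the first step consistent: the right-hand side is exactly the Theorem~\ref{theorem.DynamicProbLower.LowerBound} lower bound on $S_1$, so requiring $\hat{S}$ to lie strictly below it guarantees that imposing $S_1 \ge \hat{S}$ does not contradict the very recursion being iterated.

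I do not anticipate a real obstacle: the lemma is essentially ``a product of $T$ quantities each at least $\hat{q}$.'' The only two points needing a little care are the index bookkeeping --- confirming that the product contributes exactly $T$ factors each governed by one of the hypotheses $S_1, \dots, S_T \ge \hat{S}$ (possibly after observing that the $t = 0$ term of the product in Definition~\ref{def.DynamicProbLower.qt} is subsumed by the strict-inequality assumption on $\hat{S}$ together with $S_0 > \hat{S}$) --- and the observation that, although each $q_t$ is itself a function of the random variable $S_t$, the inequality $q_t \ge \hat{q}$ holds \emph{surely} under the stated assumption, so the product bound follows with no further probabilistic argument.
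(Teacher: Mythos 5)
Your proof is correct and takes essentially the same route as the paper: the paper establishes the bound by induction on $T$, but the inductive step is exactly your key observation that $q_t \ge \hat{q}$ whenever $S_t \ge \hat{S}$ (monotonicity of the per-step probability as a function of the area), so the two arguments differ only in packaging, and you are right that the $t=0$ factor in Definition~\ref{def.DynamicProbLower.qt} needs the bookkeeping remark you make (the paper silently starts its product at $t=1$). One cosmetic note: your $g$ retains the factor $\tfrac{1}{2}$ in the Chernoff exponent, consistent with inequality~(\ref{ineq}), whereas the lemma's displayed $\hat{q}$ drops it --- that discrepancy is the paper's own, not yours.
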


\begin{proof}
We will prove this Lemma by induction on T.
\begin{itemize}
\item \textit{Base step:}~For $T=1$, we can write the probabilities $Q_1$ and $\hat{Q}_1$ as follows~:
$$ 
Q_{1} = q_1 \geq 1 - e^{-\delta^2 \cdot p\cdot 2 \sqrt{2\cdot (S_1-k) -1}} 
$$
$$ 
\hat{Q}_1 = \hat{q} \geq 1 - e^{-\delta^2 \cdot p\cdot 2 \sqrt{2\cdot (\hat{S}-k) -1}} 
$$
As we assume that $S_t \geq \hat{S}$ than it is not hard to see that $Q_1 \geq \hat{Q}_1 $.
\item \textit{Induction hypothesis:}~We assume that for all $T \leq T'$ holds $Q_T \geq \hat{Q}_T$.
\item \textit{Induction step:}~We will prove that for  $T = T'+1$ holds $Q_T \geq \hat{Q}_T$. From definition~\ref{def.DynamicProbLower.qt} we can write the probability $Q_T$ as follows~:
$$Q_T =  Q_{T'+1} = Q_{T'}\cdot q_{T'+1}$$
By the induction hypothesis we know that for all  $T \leq T'$ holds $Q_T \geq \hat{Q}_T$ than we can rewrite $Q_{T'+1}$ and also writing $\hat{Q}_T$ for $T = T'+1$ we get that~:
$$
Q_T =  Q_{T'+1} \geq \hat{Q}_{T'}\cdot q_{T'+1}
$$
$$
\hat{Q}_T =  \hat{Q}_{T'+1} = \hat{Q}_{T'}\cdot \hat{q}
$$
As we want to compare these probabilities and to prove that $Q_T \geq \hat{Q}_T$ all we need to show is that $q_{T'+1} \geq \hat{q}$. As we assume that for all $1 \leq t \leq T'+1$ holds that $S_t \geq \hat{S}$ and particularly for $t=T'+1$, than it is not hard to see that $q_{T'+1} \geq \hat{q}$ and therefore $Q_T \geq \hat{Q}_T$.
\end{itemize}

\end{proof}

\begin{theorem} 
For any contaminated region of size $S_0$, using any cleaning protocol, the probability that $S_{\hat{\tau_{\delta}}}$, the contaminated area at time step $t = \hat{\tau_{\delta}}$, is greater or equal to some $\hat{S} < S_{0} - k + \left\lfloor 2 \cdot \left( 1-\delta \right) p \cdot \sqrt{2 \cdot (S_{0} - k) - 1} \right\rfloor$ can be lower bounded, as follows~:
\label{theorem.DynamicProbLower.LowerBound3} %
$$
Pr\left[S_{\hat{\tau_{\delta}}} \geq \hat{S} \right] \geq \left(1 - e^{-\delta^2 \cdot p\cdot 2 \sqrt{2\cdot (\hat{S}-k) -1}}\right)^{\hat{\tau_{\delta}}}
$$
where~:
$$  
\hat{\tau_{\delta}} \triangleq \frac{\sqrt{\varpi \cdot\left(\hat{S} - k - \frac{1}{2}\right)} - \sqrt{\varpi \cdot\left(S_0 - k - \frac{1}{2}\right)} + \ln \left(\frac{ \sqrt{\varpi\cdot\left(\hat{S} - k - \frac{1}{2}\right)} - \frac{k}{2}}{ \sqrt{\varpi \cdot\left(S_0 - k - \frac{1}{2}\right)} - \frac{k}{2}}\right)^{\frac{k}{2}}}{\varpi}
$$
and 
$$
\varpi \triangleq 2 (1-\delta)^2 \cdot p^2 
$$
\label{theorem.DynamicProbLower.LowerBound2} %
\end{theorem}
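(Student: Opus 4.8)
The plan is to separate the statement into its two independent halves: the probability estimate, which is essentially a repackaging of Lemma~\ref{lemma.DynamicProbLower.LowerBound}, and the closed form for $\hat\tau_\delta$, which is obtained by solving the worst-case size recursion of Theorem~\ref{theorem.DynamicProbLower.LowerBound} in continuous time.

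For the probabilistic half I would first iterate Theorem~\ref{theorem.DynamicProbLower.LowerBound}: conditioning step by step and multiplying the conditional probabilities, the worst-case size sequence $\underline S_t$ defined by $\underline S_0=S_0$ and $\underline S_{t+1}=\underline S_t-k+\lfloor 2(1-\delta)p\sqrt{2(\underline S_t-k)-1}\rfloor$ satisfies $S_t\ge\underline S_t$ for all $t$ with probability at least $Q_t=\prod_{i\le t}q_i$. By construction $\hat\tau_\delta$ is the number of steps needed for $\underline S_t$ to reach $\hat S$, so $S_{\hat\tau_\delta}\ge\hat S$ holds with probability at least $Q_{\hat\tau_\delta}$. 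Under the standing assumption stated just before Lemma~\ref{lemma.DynamicProbLower.LowerBound} that the area never falls below $\hat S$ on $1\le t\le\hat\tau_\delta$, Lemma~\ref{lemma.DynamicProbLower.LowerBound} yields $Q_{\hat\tau_\delta}\ge\hat q^{\,\hat\tau_\delta}$ with $\hat q=1-e^{-\delta^2 p\cdot 2\sqrt{2(\hat S-k)-1}}$, which is exactly the asserted lower bound on $Pr[S_{\hat\tau_\delta}\ge\hat S]$. Everything therefore reduces to identifying $\hat\tau_\delta$.

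To get $\hat\tau_\delta$ I would drop the floor and treat $t$ as a continuous variable, so that the worst-case trajectory obeys $\frac{dS}{dt}=-k+2(1-\delta)p\sqrt{2(S-k)-1}$; separating variables gives $\hat\tau_\delta=\int_{S_0}^{\hat S}\frac{dS}{-k+2(1-\delta)p\sqrt{2(S-k)-1}}$. Writing $a\triangleq 2(1-\delta)p$, the substitution $w=\sqrt{2(S-k)-1}$ gives $dS=w\,dw$ and turns the integrand into $\frac{w}{aw-k}=\frac1a+\frac{k}{a(aw-k)}$, whose antiderivative is $\frac wa+\frac{k}{a^2}\ln|aw-k|$. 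Evaluating this between $w_0=\sqrt{2(S_0-k)-1}$ and $\hat w=\sqrt{2(\hat S-k)-1}$ and rewriting everything through $\varpi=2(1-\delta)^2p^2=a^2/2$, using the identity $aw=2\sqrt{\varpi\,(S-k-\tfrac12)}$, collapses the boundary terms into precisely the displayed formula: the non-logarithmic part becomes $\varpi^{-1}\big(\sqrt{\varpi(\hat S-k-\tfrac12)}-\sqrt{\varpi(S_0-k-\tfrac12)}\big)$ and the logarithmic part becomes $\varpi^{-1}\ln\big(\tfrac{\sqrt{\varpi(\hat S-k-1/2)}-k/2}{\sqrt{\varpi(S_0-k-1/2)}-k/2}\big)^{k/2}$.

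The step I expect to be the real obstacle is the passage from the discrete recursion to the continuous ODE. The one-step increment map $S\mapsto -k+2(1-\delta)p\sqrt{2(S-k)-1}$ is increasing and \emph{concave} in the relevant regime, so the continuous trajectory and the discrete iterate genuinely differ, and one has to carry out the comparison carefully (and absorb the floor $\lfloor\cdot\rfloor$) in order to be sure the resulting $\hat\tau_\delta$ sits on the intended side of the true step count. One also has to record explicitly the hypothesis that makes the formula meaningful, namely $2\sqrt{\varpi(S_0-k-\tfrac12)}>k$ — i.e.\ the expansion term dominates the $k$ tiles removed per step — which is what keeps the integrand of constant sign and the argument of the logarithm positive, and which is exactly the regime in which the contaminated region keeps expanding. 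Past that, only routine calculus and algebraic simplification remain.
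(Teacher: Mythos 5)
Your proposal is correct and follows essentially the same route as the paper: the probability bound is obtained exactly as in the paper by iterating Theorem~\ref{theorem.DynamicProbLower.LowerBound} and invoking Lemma~\ref{lemma.DynamicProbLower.LowerBound} under the standing assumption $S_t \geq \hat{S}$, and your closed form for $\hat{\tau_{\delta}}$ comes from the same continuous-time separation of variables, with your substitution $w=\sqrt{2(S-k)-1}$ being the paper's $x=\sqrt{8(1-\delta)^2p^2\left(y-k-\tfrac{1}{2}\right)}$ up to the constant factor $a=2(1-\delta)p$. Your explicit flagging of the discrete-to-continuous comparison and of the condition $2\sqrt{\varpi\left(S_0-k-\tfrac{1}{2}\right)}>k$ is, if anything, slightly more careful than the paper, which handles the floor only via the inequality $\lfloor x\rfloor\leq x$ and does not state the sign condition.
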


\begin{proof} 
Observe that by denoting $y_{t} \triangleq S_{t}$
Theorem~\ref{theorem.DynamicProbLower.LowerBound} can be written as~:
\begin{displaymath}
y_{t+1} - y_{t} \geq \left\lfloor 2\cdot (1-\delta) \cdot p\sqrt{2 \cdot (y_{t} - k) - 1}\right\rfloor - k
\end{displaymath}

Searching for the minimal area we can look at the equation~:
\begin{displaymath}
y_{t+1} - y_{t} = \left\lfloor 2\cdot (1-\delta) \cdot p \sqrt{2 \cdot (y_{t} - k) - 1}\right\rfloor - k
\end{displaymath}

By dividing both sides by $\Delta t = 1$ we obtain~:
\begin{equation}
\label{equation_lowerbound_y'} %
y_{t+1} - y_{t} \triangleq y' = \left\lfloor \sqrt{(1-\delta)^2 \cdot p^2 \cdot 8 \left[ y -  \left(k + \frac{1}{2}\right) \right]}\right\rfloor - k
\end{equation}

Notice that the values of $y'$, the derivative of the change in the region's size, might be positive (stating an increase in the area), negative
(stating a decrease in the area), or complex numbers (stating that the area is smaller than $k$, and will therefore be cleaned before the next time step).

Let us denote $x^{2} \triangleq (1-\delta)^2 \cdot p^2 \cdot 8 \left[ y -  \left(k + \frac{1}{2}\right) \right]$. After calculating the
derivative of both sides of this expression we see that~:
\begin{displaymath}
2x \cdot x' = (1-\delta)^2 \cdot p^2 \cdot 8 y'
\end{displaymath}
and after using the definition of $y'$ of
Equation~\ref{equation_lowerbound_y'} we see that~:
\begin{eqnarray}
 \label{equation_lowerbound_x'}
2x \cdot \frac{dx}{dt} = 2x \cdot x' &=& (1-\delta)^2 \cdot p^2 \cdot 8 \left(\left\lfloor \sqrt{(1-\delta)^2 \cdot p^2 \cdot 8 \left[ y -  \left(k + \frac{1}{2}\right) \right]}\right\rfloor - k \right) \nonumber \\ &\leq&  (1-\delta)^2 \cdot p^2 \cdot 8\left(x - k\right)
\end{eqnarray}

From Equation~\ref{equation_lowerbound_x'} a definition of $dt$
can be extracted~:
\begin{eqnarray*}
dt &\geq& \frac{1}{8 \cdot (1-\delta)^2 \cdot p^2} \cdot \frac{2x}{x - k} dx \\ &\geq& \frac{1}{4 (1-\delta)^2 \cdot p^2} \cdot \frac{x - k+ k}{x - k} dx \geq \frac{1}{4 (1-\delta)^2 \cdot p^2} \left(1 + \frac{k}{x - k} \right)dx
\end{eqnarray*}

The value of $x$ can be achieved by integrating the previous expression as follows (notice that we
are interested in the equality of the two expressions)~:
\begin{displaymath}
\int_{t_{0}}^t dt = \int_{x_{0}}^x  \frac{1}{4 (1-\delta)^2 \cdot p^2} \left(1 + \frac{k}{x - k} \right)
dx
\end{displaymath}

After the integration we can see that~:
\begin{displaymath}
i \bigg|_{t_{0}}^t = \frac{1}{4 (1-\delta)^2 \cdot p^2} \left(x + k \ln \left(x - k \right) \right)
\bigg|_{x_{0}}^x
\end{displaymath}
and after assigning $t_{0} = 0$~:
\begin{displaymath}
4 (1-\delta)^2 \cdot p^2 \cdot t = x - x_{0} + k \ln\frac{x - k}{x_{0} - k}
\end{displaymath}

Returning back to $y$ and using $\varpi$ definition we get~:
\[
\varpi \cdot t = \sqrt{\varpi\left(y - k - \frac{1}{2}\right)} - \sqrt{\varpi\left(y_0 - k - \frac{1}{2}\right)} + \ln \left(\frac{ \sqrt{\varpi\left(y - k - \frac{1}{2}\right)} - \frac{k}{2}}{ \sqrt{\varpi\left(y_0 - k - \frac{1}{2}\right)} - \frac{k}{2}}\right)^{\frac{k}{2}}
\]
Returning to the original size variable $S_{t}$, we see that~:
\begin{equation}
\label{eq.DynamicProbLower.LowerBound3}
\varpi \cdot t = \sqrt{\varpi\left(S_t - k - \frac{1}{2}\right)} - \sqrt{\varpi\left(S_0 - k - \frac{1}{2}\right)} + \ln \left(\frac{ \sqrt{\varpi\left(S_t - k - \frac{1}{2}\right)} - \frac{k}{2}}{ \sqrt{\varpi\left(S_0 - k - \frac{1}{2}\right)} - \frac{k}{2}}\right)^{\frac{k}{2}}
\end{equation}

Defining that $\hat{\tau_{\delta}}=t$ and combining EQ.~\ref{eq.DynamicProbLower.LowerBound3} with Lemma~\ref{lemma.DynamicProbLower.LowerBound} knowing that $S_{t'} \geq \hat{S}$ for all $ 1 \leq t' \leq \hat{\tau_{\delta}}$ we get the following inequality:

$$
Pr\left[S_{\hat{\tau_{\delta}}} \geq \hat{S} \right] \geq \left(1 - e^{-\delta^2 \cdot p\cdot 2 \sqrt{2\cdot (\hat{S}-k) -1}}\right)^{\hat{\tau_{\delta}}}
$$
where~:
$$  
\hat{\tau_{\delta}} \triangleq \frac{\sqrt{\varpi \cdot\left(\hat{S} - k - \frac{1}{2}\right)} - \sqrt{\varpi \cdot\left(S_0 - k - \frac{1}{2}\right)} + \ln \left(\frac{ \sqrt{\varpi\cdot\left(\hat{S} - k - \frac{1}{2}\right)} - \frac{k}{2}}{ \sqrt{\varpi \cdot\left(S_0 - k - \frac{1}{2}\right)} - \frac{k}{2}}\right)^{\frac{k}{2}}}{\varpi}
$$
and 
$$
\varpi \triangleq 2 (1-\delta)^2 \cdot p^2 
$$
\end{proof} 

In Theorem~\ref{theorem.DynamicProbLower.LowerBound2} we can guarantee with high probability of $\hat{Q}_{\tau_{\delta}}$ that the contamination region's size will not be lower than $\hat{S}$~-~ namely for any time step $t > \tau_{\delta}$ the probability $Pr[S_t \geq \hat{S}]$ is getting lower and therefore the probability that the agents will succeed in cleaning the contaminated area is increasing. We are showing that by choosing small enough $\hat{S}$ so we know that the agents will succeed in cleaning the rest of the 'dirty' region, we will be able to guarantee with high probability the whole cleaning of the 'dirty' region. For example, choosing $\hat{S}$ to be in $o(k)$ will assure that for $S_t \leq \hat{S} \leq c \cdot k$ for some small constant $c$, the rest of the contaminated region will be cleaned in at most $c$ time steps by the $k$ cleaning agents.

\begin{figure}
\centering
\subfigure[The bound on $t$ for various values of $\delta$ as a function of $\hat{S}$] 
{
    \label{subfig.tdeltas}
    \includegraphics[height=8cm,width=9cm]{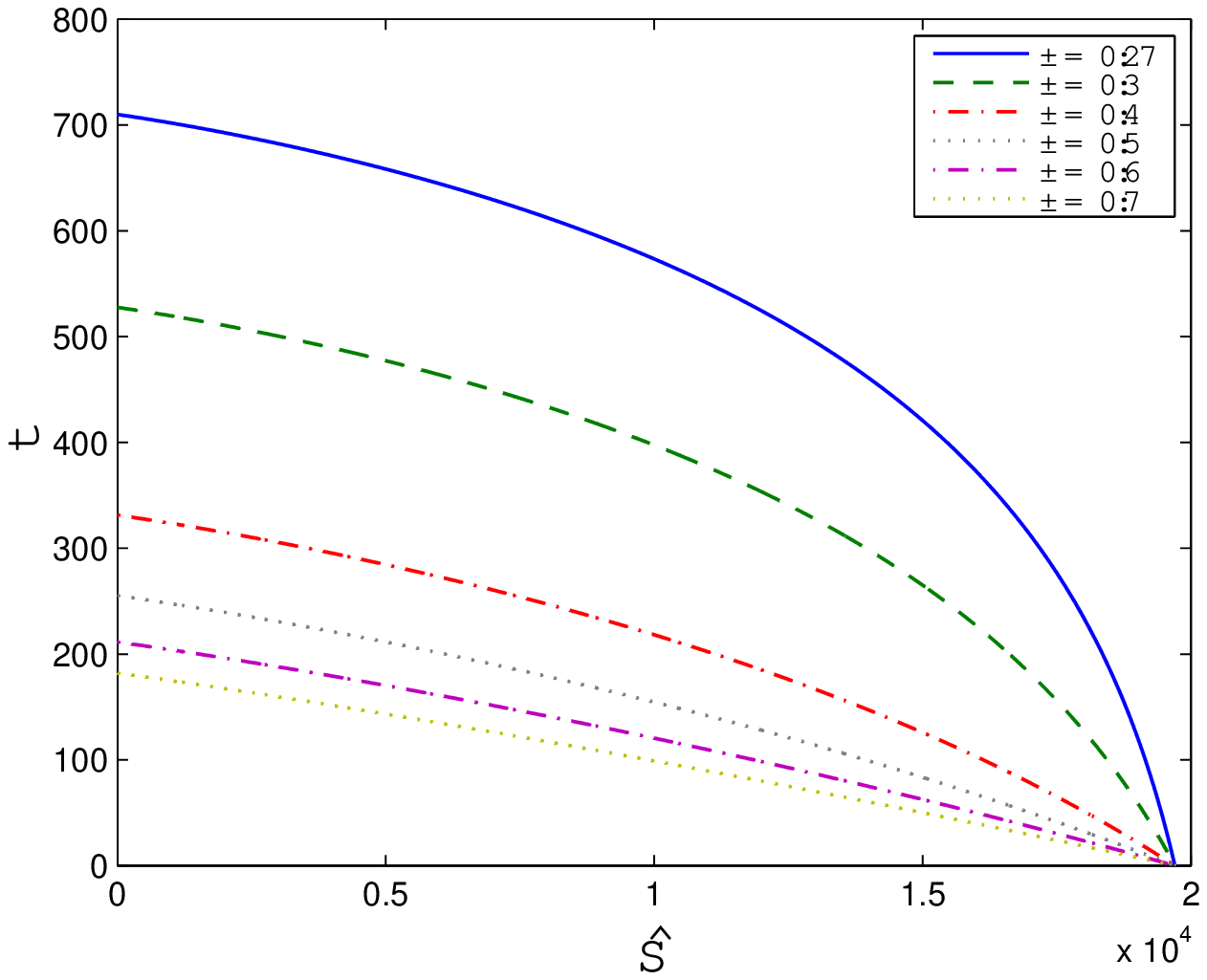}
}
\vspace{0.1cm}
\subfigure[The bound on $t$ for various number of agents $k$ as a function of $\hat{S}$] 
{
    \label{subfig.tagents}
    \includegraphics[height=8cm,width=9cm]{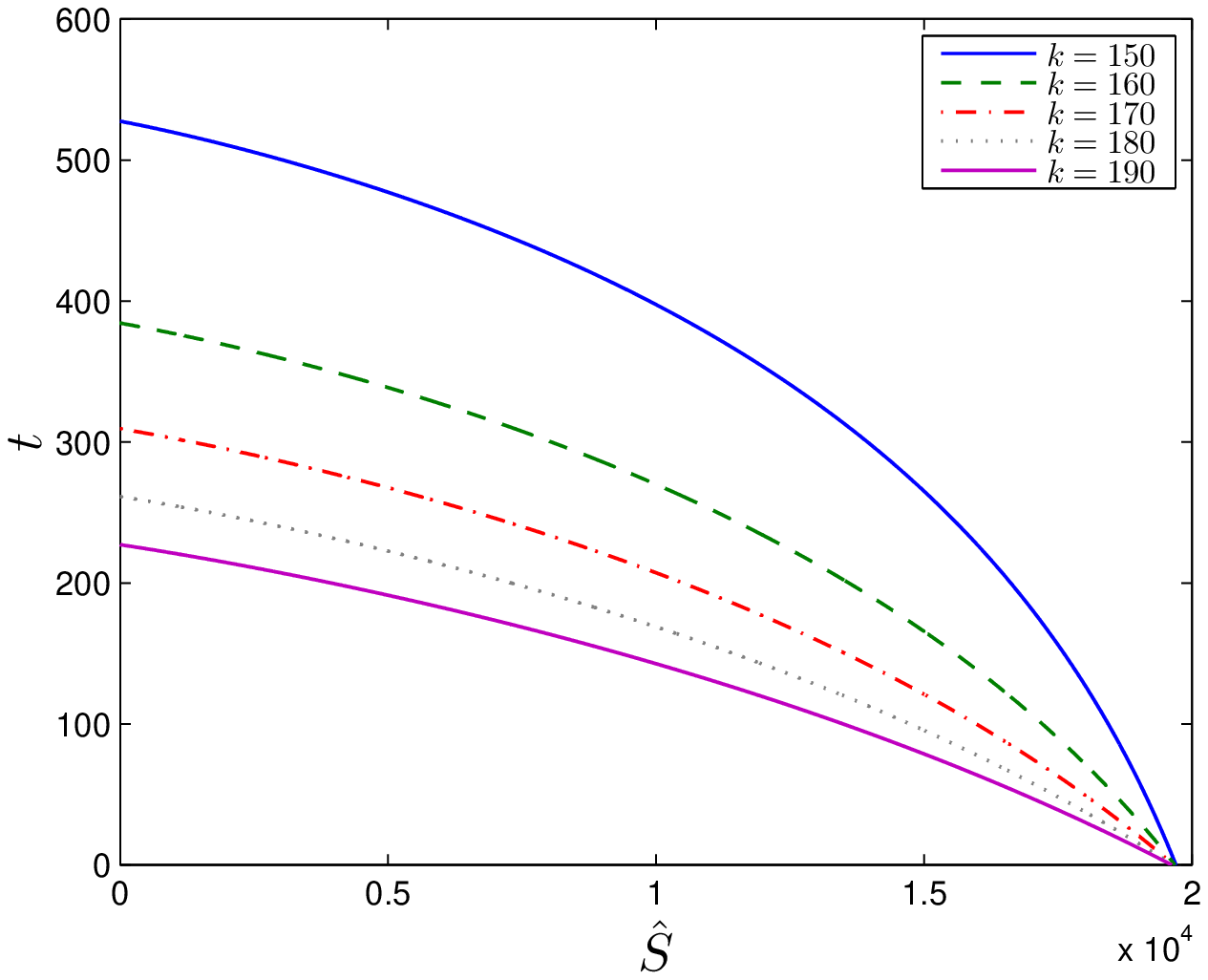}
}
\caption{An illustration of the bound presented in Theorem~\ref{theorem.DynamicProbLower.LowerBound2} of the cleaning time $t$ in order to reach $\hat{S}$. In \subref{subfig.tdeltas} we can see the bound on the cleaning time where $p=0.5$ and $\delta \in \left[ 0.27, 0.3, 0.4,0.5,0.6,0.7 \right]$ where the cleaning done by $k=150$ agents and starting with $S_0=20000$. In \subref{subfig.tagents} we have the lower bound on the cleaning time for $S_0 = 20000 \mbox{, } p = 0.5 \mbox{ and } \delta = 0.3$ for different number of agents $k \in \left[150,160,170,180,190\right]$.  }
\label{fig.DynamicProbLower.t}
\end{figure}

\begin{figure}
\centering
\subfigure[The bound probability $Q_t$ as a function of $\hat{S}$ for various $\delta$ values] 
{
    \label{subfig.t-delta}
    \includegraphics[height=8cm,width=9cm]{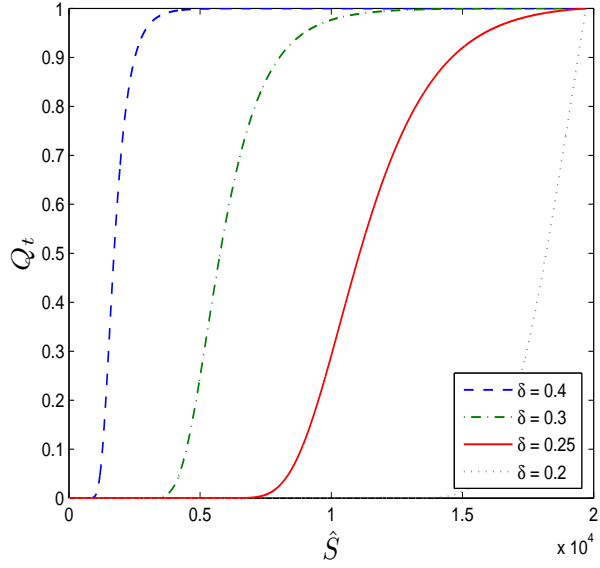}
}
\vspace{0.1cm}
\subfigure[The bound probability $Q_t$ as a function of $\hat{S}$ for various $p$ values] 
{
    \label{subfig.t-p}
    \includegraphics[height=8cm,width=9cm]{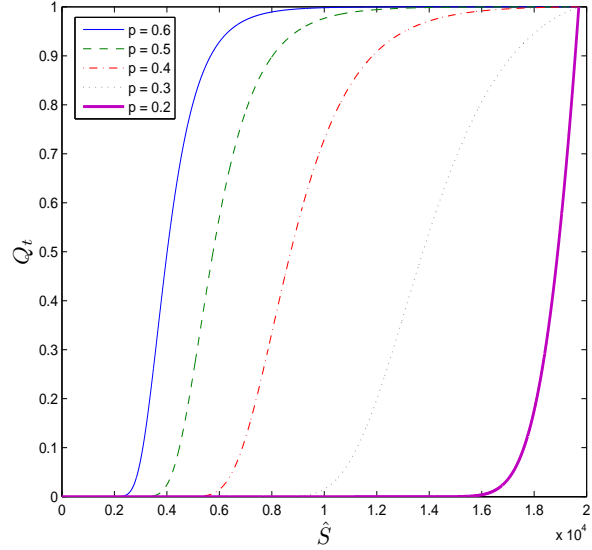}
}
\caption{An illustration of the probability produced by the bound presented in Theorem~\ref{theorem.DynamicProbLower.LowerBound2}. In \subref{subfig.t-delta} we can see the bound probability $Q_t$ for initial region's size $S_0 = 20000$, spreading probability $p= 0.5$ and number of agents $k=150$ for the following of values of $\delta \in \left[0.4,0.3,0.25,0.2\right]$. In \subref{subfig.t-p} we can see $Q_t$ for $S_0 = 20000$, $\delta = 0.3$ and $k=150$ for the following of values of $p \in \left[0.6,0.5,0.4,0.3\right]$}
\label{fig.DynamicProbLower.tp} 
\end{figure}

An illustration of the bound on the cleaning time, as presented in Theorem~\ref{theorem.DynamicProbLower.LowerBound2}, is shown in Figure~\ref{fig.DynamicProbLower.t} and the corresponding bound probability in Figure~\ref{fig.DynamicProbLower.tp}. Notice that as $\hat{S}$ and $\delta$ increase the cleaning time decreases.

\subsection{Using The Bound}
\label{u-bound}

In Theorem~\ref{theorem.DynamicProbLower.LowerBound2} we presented a bound which guarantees that the contaminated region's size will not be smaller than some predefined size $\hat{S}$ with the bound probability, $Q_t$. We suggest a method which make this bound useful when one willing to be guaranteed of successfully cleaning of the contaminated area with some desired probability $Q_t$ in a certain model, where the initial  contaminated region's size is $S_0$, each one of the tiles in the surrounding neighborhood of the contaminated area has a probability $p$ to be contaminated by each one of its neighbors's contamination spreads and with $k$ cleaning agents. 

Notice that the only free variables left in the bound are the analysis parameters $\delta$ and $\hat{S}$. We should also notice the fact that as $\delta$ decreases the ``usefulness'' of the bound decreases (see Figure~\ref{fig.DynamicProbLower.tp}) because when $\delta$ is closer to 0, the model tern to the deterministic variant of the \emph{cooperative cleaning} problem. In this variant of the problem the bound, as presented in Theorem~\ref{theorem.DynamicProbLower.LowerBound2}, will ``predict'' that the contamination will spread exactly by the \emph{potential boundary} mean in every step (as shown in~\ref{d-bound}). Furthermore, as $\delta$ increases, although the ``usefulness'' of the bound increases, the predicted bound is the naive one, where at each step there are no spreads and all the $k$ agents clean perfectly~-~i.e. the size of the contaminated region at time step $t$ will be exactly $S_t = S_0 - t\cdot k$ (which can be guaranteed in high probability).

We suggest the following method, in order to eliminate the need to identify the analysis parameters. Once someone willing to use this bound he should provide the desired bound probability~-~$Q_t$ and the parameter of the model. Then for each value of $\delta$ in the range of $\left[0,1\right]$ he should find the corresponding value of the minimal $\hat{S}$ which satisfies the inequality $Pr\left[S_t \geq \hat{S}\right] \geq Q_t$ as illustrated in Figure~\ref{fig.DynamicProbLower.Use.shat2delta}. As $\delta \in \mbox{\bf R}$~-~i.e. a real number, once using this method we should choose the granularity of $\delta$ for which calculate the appropriate $\hat{S}$.

\begin{figure}
	\centering
	\subfigure[$\hat{S}$ as a function of $\delta$ for various bound probability $Q_t$ values] 
{
    \label{subfig.DynamicProbLower.Use.shat2delta}
    \includegraphics[height=8cm,width=9cm]{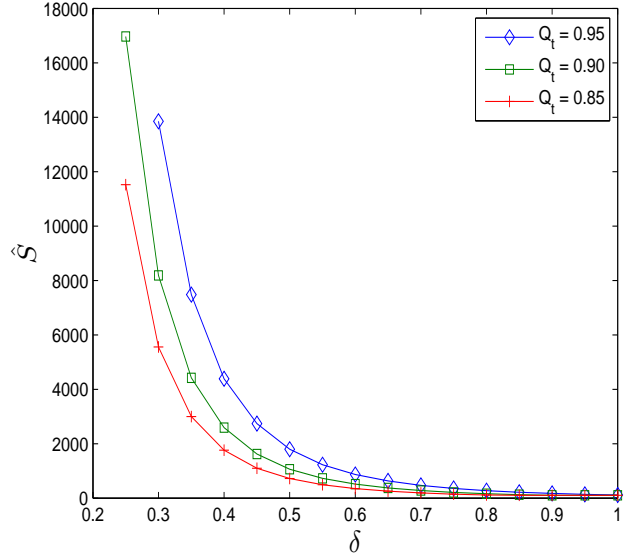}
}
\vspace{0.1cm}
\subfigure[The region's size $S_t$ for various values of $<Q_t,\hat{S}>$ pairs] 
{
    \label{subfig.DynamicProbLower.Use.allSt}
    \includegraphics[height=8cm,width=9cm]{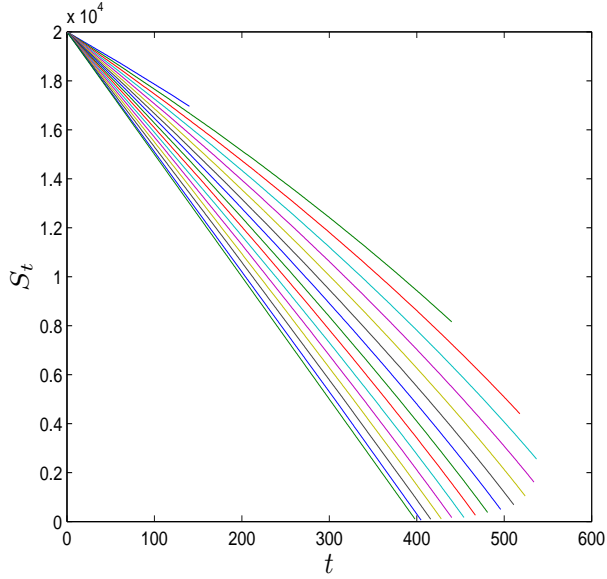}
}
	\caption{Figure~\subref{subfig.DynamicProbLower.Use.shat2delta} is an illustration of $\hat{S}$ as a function of $\delta$ using the bound presented in Theorem~\ref{theorem.DynamicProbLower.LowerBound2} for the following model parameters~--~$S_0=20000$, $p=0.1$ and $k=50$ for various values of $Q_t$. Figure~\subref{subfig.DynamicProbLower.Use.allSt} is an illustration of the cleaning process for various pairs of values of $\hat{S}$ and $\delta$ as shown in Figure~\subref{subfig.DynamicProbLower.Use.shat2delta} for the same model parameters with $Q_t=0.95$.}
	\label{fig.DynamicProbLower.Use.shat2delta} 
\end{figure}

Notice that there can be exists some minimal value of $\delta=\delta_{MIN}$, where for any value of $\delta < \delta_{MIN}$ there is no solution for the bound inequality. Also notice that there can be exists some maximal value of $\delta=\delta_{MAX}$, where for any value of $\delta > \delta_{MIN}$ the corresponding $\hat{S}$ is the same as for $\delta_{MAX}$.

Furthermore, for each pair of values of $\delta$ and $\hat{S}$ there exists its corresponding cleaning precess of $k$ agents with initial region's size $S_0$ as demonstrated in Figure~\ref{subfig.DynamicProbLower.Use.allSt}. Each curve bounds the cleaning process from $S_0$ to the applicable $\hat{S}$.

As we are interested in finding the tightest bound, looking at the frontier of the bounds, as shown in Figure~\ref{subfig.DynamicProbLower.Use.allSt}, we can combine the relevant curves to one comprehensive bound. This bounds integrates the bound for a specific range of $j$ values of $\delta \in \left[ \delta_{i_1},\delta_{i_2},...,\delta_{i_j}\right]$, where for each time step $t$ we choose the maximal $S_t$ as illustrated in Figure~\ref{subfig.DynamicProbLower.Use.combinedSt}.

Notice that the combined bound is independent of the selection of values for analysis parameters $\delta$ and $\hat{s}$. Also we can notice that this bound limits the contaminated region's size to some minimal $\hat{S}_{MIN}$
where we almost certain that the agents will succeed in terminating the cleaning process successfully. 

Notice that the inequality in Theorem~\ref{theorem.DynamicProbLower.LowerBound2} bounds the probability that the contaminated region's size at time step $t$ will not be smaller than $\hat{S}$ , therefore, an increase in the spreadings probability causes to an increase in the expected contaminated region's size and thus increases the probability $\left( \hat{q} \right)^t$ (as demonstrated in Figure~\ref{subfig.t-p}).

\begin{figure}
\centering

\subfigure[The combined bounds for various values of the bound probability $Q_t$] 
{
    \label{subfig.DynamicProbLower.Use.combinedSt}
    \includegraphics[height=8cm,width=9cm]{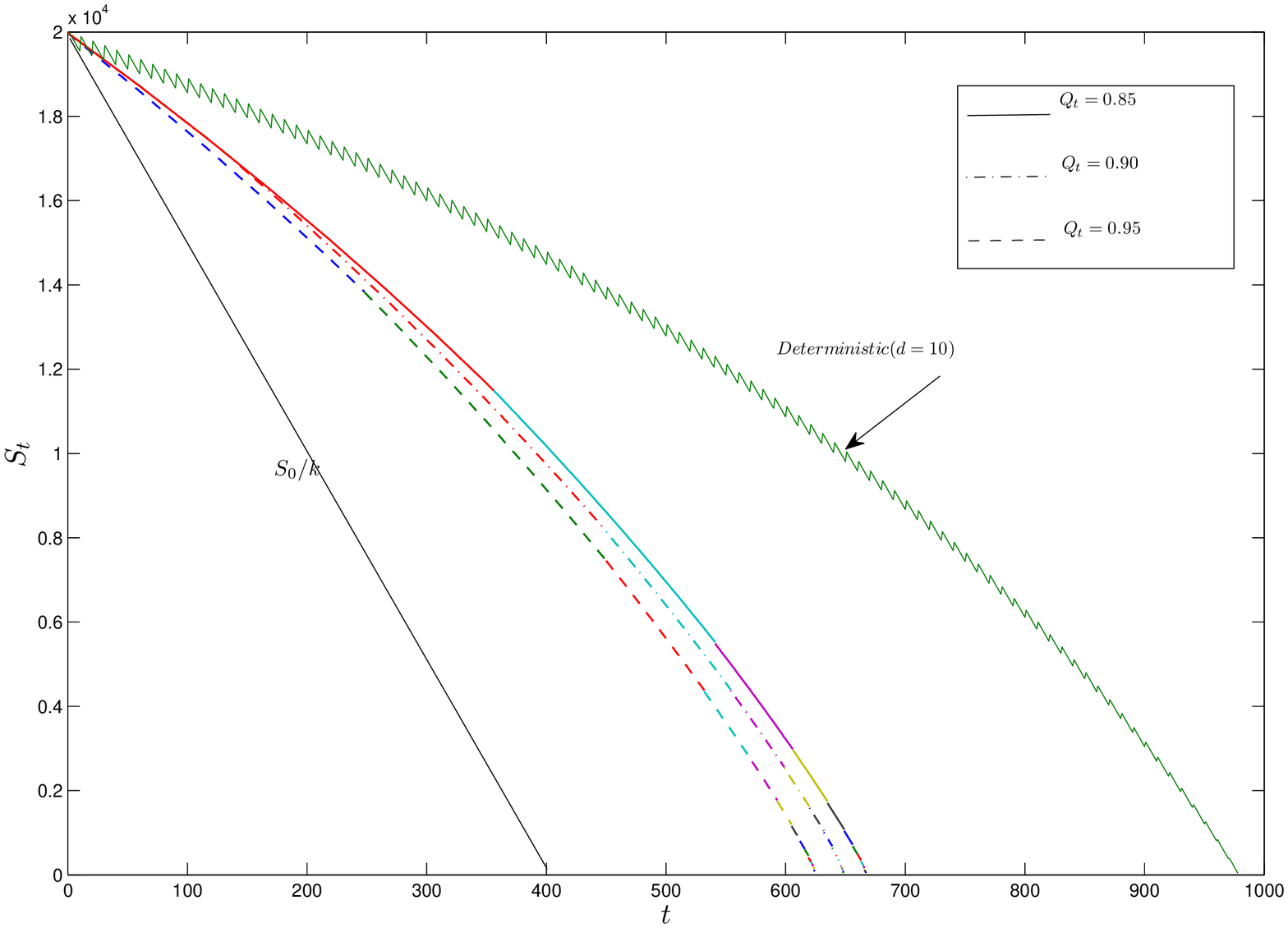}
}
\vspace{0.1cm}
\subfigure[The ratio between the bounds as function of the number of agents $k$] 
{
    \label{subfig.bounds.deltat2k}
    \includegraphics[height=8cm,width=9cm]{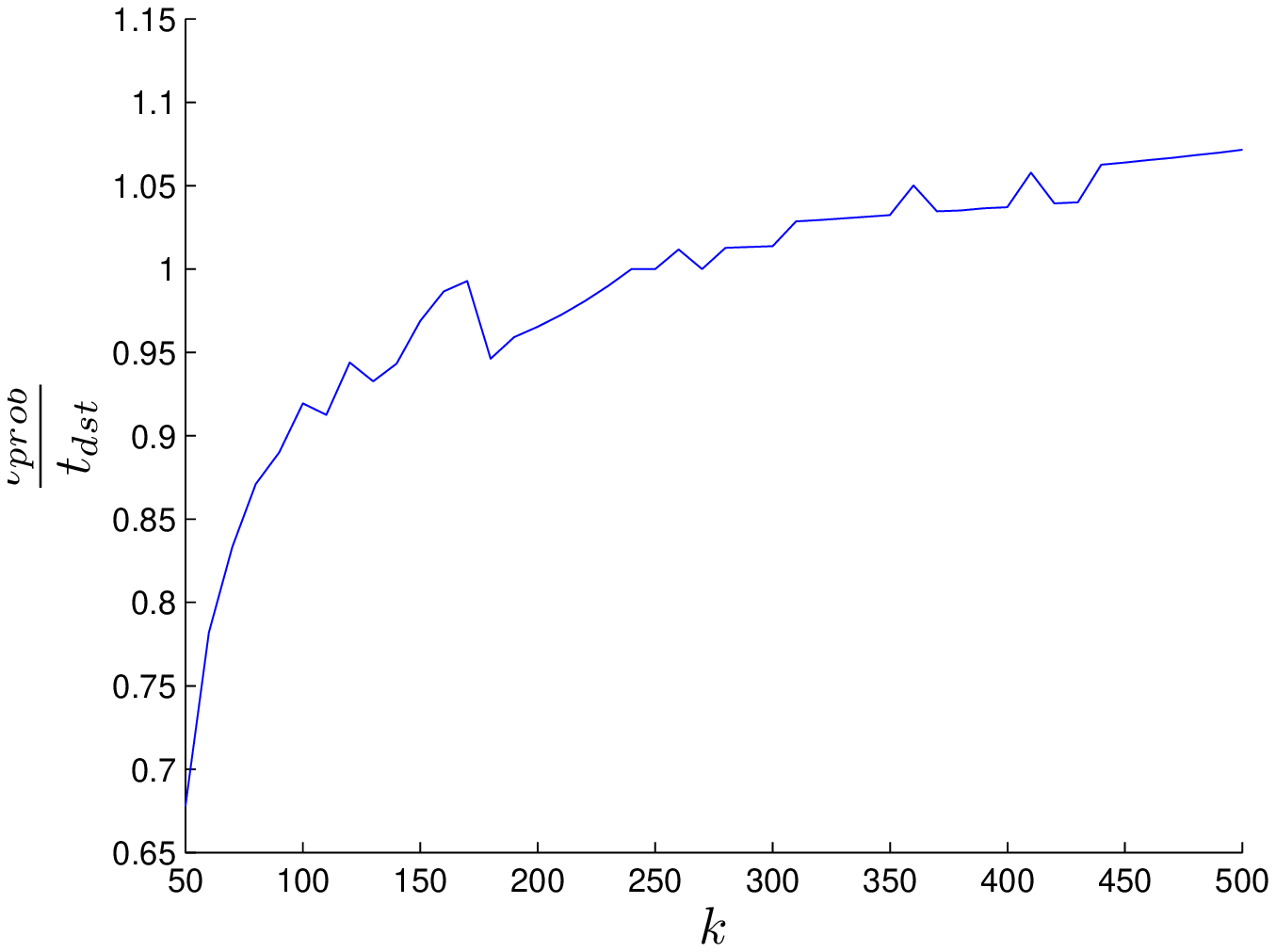}
}
\caption{Figure~\subref{subfig.DynamicProbLower.Use.combinedSt} An illustration of the combined bound for the bounds as shown in Figures~\ref{subfig.DynamicProbLower.Use.shat2delta} and \ref{subfig.DynamicProbLower.Use.allSt} for the same model parameters~--~$S_0=20000$, $p=0.1$ and $k=50$ for various values of $Q_t$ compared to the deterministic model with $d=10$ and to the naive bound $S_0 / k$.
Figure~\subref{subfig.bounds.deltat2k} compare the deterministic bound and the probabilistic bound as a function of the desired guaranteeing probability~-~$Q_t$ for various contaminated region's sizes and number of cleaning agents.}
\label{fig.DynamicProbLower.Use.boundsComparison} 
\end{figure}

\begin{figure}
\centering
\subfigure[The ratio between the bounds as function of the bound probability $Q_t$ for various number of agents $k$] 
{
    \label{subfig.bounds.deltat2qt.k}
    \includegraphics[height=8cm,width=9cm]{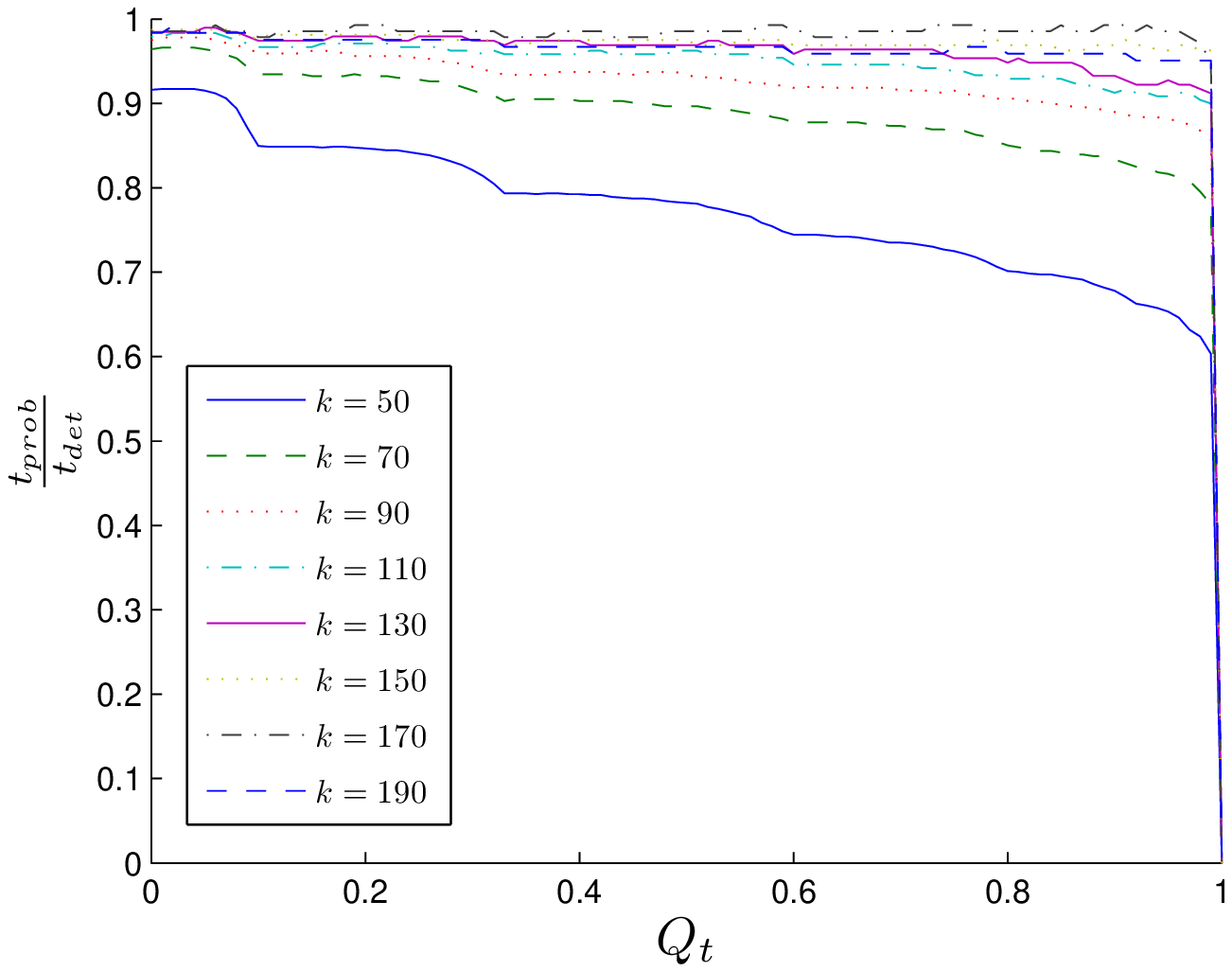}
}
\vspace{0.1cm}
\subfigure[The ratio between the bounds as function of the bound probability $Q_t$ for various values of initial region's size $S_0$] 
{
    \label{subfig.bounds.deltat2qt.s0}
    \includegraphics[height=8cm,width=9cm]{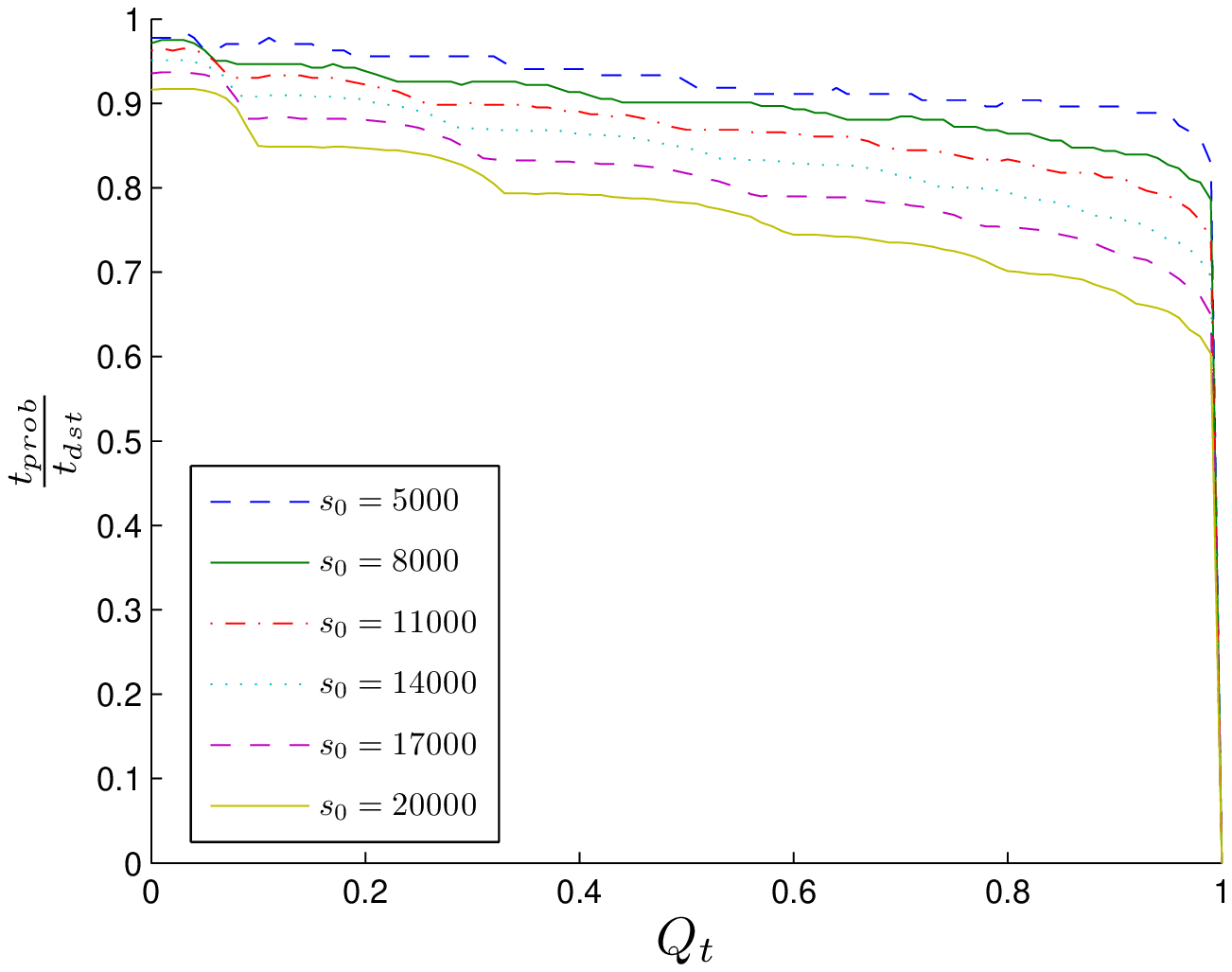}
}
\caption{
Figures~\subref{subfig.bounds.deltat2qt.s0} and~\subref{subfig.bounds.deltat2qt.k} compare the deterministic bound and the probabilistic bound as a function of the desired guaranteeing probability~-~$Q_t$ for various contaminated region's sizes and number of cleaning agents.}
\label{fig.DynamicProbLower.Use.boundsComparison2} 
\end{figure}

Notice that as illustrated in Figure~\ref{subfig.bounds.deltat2k}, as the number of agents increases the probabilistic and the deterministic bounds are more similar. This result is not surprising considering the method we presented. In our method, in order to make the bound tighter, we favor the lower values of $\delta$. As the number of agents increases the bound can be guaranteed in the desired probability with lower values of $\delta$ and as $\delta$ decreases the model becomes more similar to the deterministic one.

\subsection{Parameters Selection}
\label{p-selection}

One of the problems of the bound as brought in Section~\ref{d-bound} is the nature of the probabilistic bounds to decay to 0 (as shown in Figure~\ref{fig.DynamicProbLower.tp}), which caused due to the fact that the bound probability $Q_t$ is a product of each step's probability, $q_t$, and because as $t$ increase $Q_t$ decreases. One of the reasons which explains this problem is a bad selection of parameters~-~e.g. in the bound for the cleaning time (Eq.~\ref{eq.DynamicProbLower.LowerBound3} a selection of too small $\hat{S}$ will lead to fast decay in the probability. Furthermore, there exist trade-offs, when selecting the parameters' values, between the bound results and the the probability which guarantees its likelihood (e.g. see Figure~\ref{fig.DynamicProbLower.od}). 

One way to avoid this problem
 is by selecting $\hat{S}$ as big as possible, as in Figures~\ref{fig.DynamicProbLower.tp}and \ref{fig.DynamicProbLower.od}. As $\hat{S}$ increases the probability of each time step, $q_t$, increases and so the total probability $Q_t$. Another technique for eliminating this problem is by ``wrapping'' number of time steps into one, thus artificially decreasing the time and therefore decreasing the power of $q_t$ in $Q_t$ (in Lemma~\ref{lemma.DynamicProbLower.LowerBound}).

Another example of the trade-off in choosing the parameters can be shown in Figure~\ref{subfig.t-p} where $Q_t$ is illustrated for various values of the probability $p$. Interestingly, as $p$ decreases our confidence in the bound result is decreasing although we know that the the agents preforming the cleaning protocol have a better chance to successfully complete their work.

\begin{figure}
\centering
\subfigure[The bound probability $Q_t$ as a function of $\delta$ for various $p$ values] 
{
    \label{subfig.od-p}
    \includegraphics[height=8cm,width=9cm]{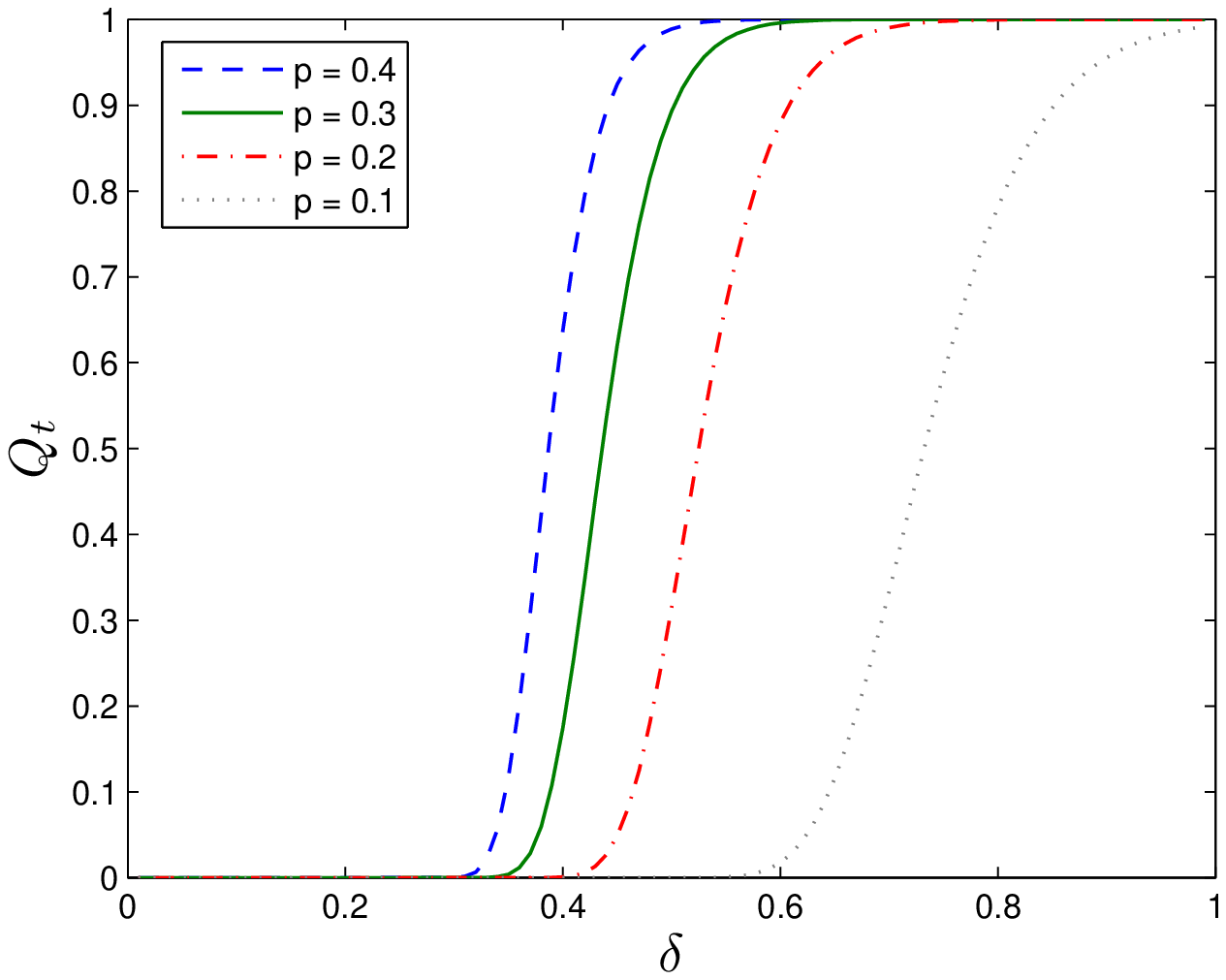}
}
\vspace{0.1cm}
\subfigure[The bound probability $Q_t$ as a function of $\delta$ for various numbers of agents] 
{
    \label{subfig.od-k}
    \includegraphics[height=8cm,width=9cm]{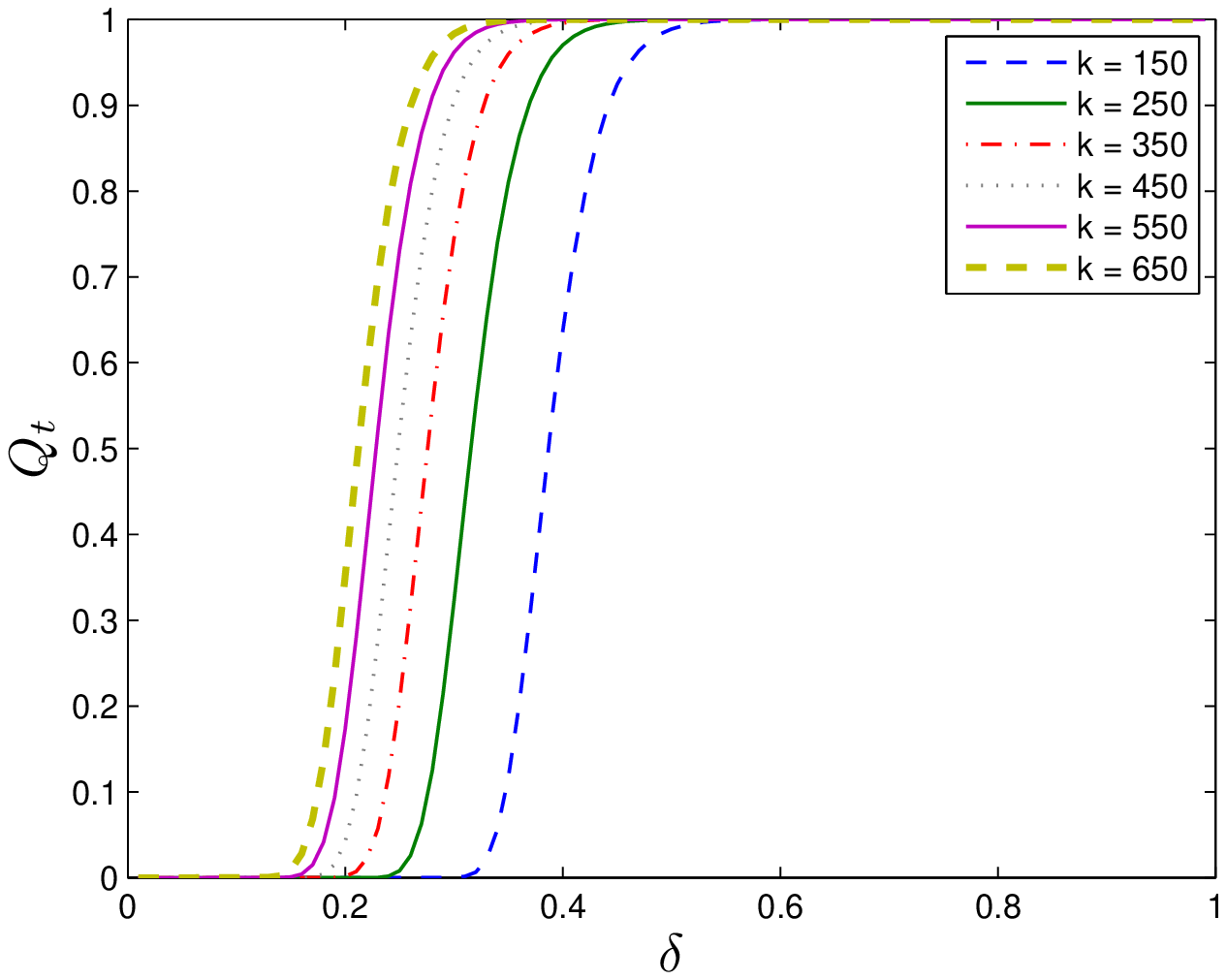}
}
\caption{Parameters selection. In \subref{subfig.od-p} we can see $Q_t$ for $S_0 = 20000$ and $k=150$ for the following of values of $p \in \left[0.4,0.3,0.2,0.1\right]$. In \subref{subfig.od-k} we can see $Q_t$ for $S_0 = 20000$ and $p = 0.4$ for the following of values of $k \in \left[150,250,350,450,550,650\right]$}
\label{fig.DynamicProbLower.od} 
\end{figure}

\section{Impossibility Result}
\label{i-result}
While the theoretical lower bound presented in Section~\ref{l-bound} can decrease the uncertainty whether a solution for the cleaning problem with certain number of agents exists, one might be interested in the opposite question, namely~---~how can we guarantee that a group of agents \emph{will not} be able to successfully accomplish the cleaning work (regardless of the cleaning protocol being used or the contaminated region's properties e.g. its shape and spreading probability). A first impossibility result for the deterministic case of the Cooperative Cleaners problem was published in~\cite{2010CSMAS}), where an initial size that is impossible to clean (using any protocol) was demonstrated. In this paper, we extend this result in order to be applicable for stochastically expending domains as well. Consequently, we will set the impossibility result with probabilistic restrictions as follows~:

\begin{theorem}
\label{theorem.DynamicProbLower.impossibility}
Using any cleaning protocol, $k$ agents cleaning a contaminated region 
, where each tile in the \emph{potential boundary} may be contaminate by already contaminated neighboring tiles with some probability $p$ in each time step, will not be able to cleat this contaminated region if~:
$$ S_0 > \left\lfloor \frac{k^2}{8 \cdot p^2}+k+\frac{1}{2} \right\rfloor $$
with the probability $Q_{t}$ for any time step $t$~-~i.e.~:
$$\forall{t} Pr\left[ S_{t} \geq S_0 \right] = Q_t$$
\end{theorem}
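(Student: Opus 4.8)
The plan is to reduce the statement to Theorem~\ref{theorem.DynamicProbLower.LowerBound} by showing that, under the hypothesis on $S_0$, the recursive lower bound proved there never strictly decreases once the region has size at least $S_0$. First I would exploit integrality: since $S_0\in\mathbb{Z}$, the hypothesis $S_0>\big\lfloor \tfrac{k^2}{8p^2}+k+\tfrac12\big\rfloor$ is equivalent to $S_0\ge \big\lfloor \tfrac{k^2}{8p^2}+k+\tfrac12\big\rfloor+1$, which forces the \emph{strict} inequality $S_0-k-\tfrac12>\tfrac{k^2}{8p^2}$, i.e. $2p\sqrt{2(S_0-k)-1}>k$ (in particular $2(S_0-k)-1>\tfrac{k^2}{4p^2}>0$, so the square root is real and $S_0>k$). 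Hence $\delta^{\star}\triangleq 1-\frac{k}{2p\sqrt{2(S_0-k)-1}}$ is a genuinely \emph{positive} number in $(0,1)$, and for every $\delta\in(0,\delta^{\star}]$ one has $2(1-\delta)p\sqrt{2(S_0-k)-1}\ge k$, so (as $k$ is an integer) $\big\lfloor 2(1-\delta)p\sqrt{2(S_0-k)-1}\big\rfloor\ge k$. By monotonicity of $x\mapsto\sqrt{2(x-k)-1}$ the same conclusion holds with $S_0$ replaced by any $S_t\ge S_0$.

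Next I would feed this back into Theorem~\ref{theorem.DynamicProbLower.LowerBound} with this $\delta$: conditioned on $S_t\ge S_0$, with probability at least $q_t$,
\[ S_{t+1}\ \ge\ S_t-k+\left\lfloor 2(1-\delta)p\sqrt{2(S_t-k)-1}\right\rfloor\ \ge\ S_t-k+k\ =\ S_t\ \ge\ S_0 , \]
so $\Pr[S_{t+1}\ge S_0\mid S_t\ge S_0]\ge q_t$. Since $S_0\ge S_0$ trivially, a straightforward induction on $t$ — structurally identical to the induction in the proof of Lemma~\ref{lemma.DynamicProbLower.LowerBound}, now with $\hat S$ taken to be $S_0$ itself — yields $\Pr[S_t\ge S_0]\ge \prod_{i=0}^{t}q_i = Q_t$ for every $t$ (Definition~\ref{def.DynamicProbLower.qt}), and moreover $Q_t\ge \hat q^{\,t}$ with $\hat q=1-e^{-\delta^2 p\cdot 2\sqrt{2(S_0-k)-1}}$ by that Lemma. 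Finally, since any configuration with $S_t\ge S_0>0$ is by definition not a fully cleaned region, the agents have not finished the cleaning at time $t$ with probability at least $Q_t$, for every $t$ — which is exactly the assertion (the ``$=$'' in the statement being read as the explicit lower bound $Q_t$, which one maximizes by taking $\delta$ as large as admissible, i.e. $\delta=\delta^{\star}$).

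The genuinely delicate points are two. The first is handling the strict-versus-non-strict inequalities around the two floor functions carefully: it is precisely the integrality of both $S_0$ and $k$ that creates the slack needed to obtain a strictly positive $\delta^{\star}$, without which every $q_t$ — and hence $Q_t$ — would collapse to $0$. The second is simply to note that the digital-sphere choice $\Delta F_t=2\sqrt{2(S_t-k)-1}$ used in Theorem~\ref{theorem.DynamicProbLower.LowerBound} is already the adversary-optimal (minimal-growth) configuration, so no additional shape argument is required here. I expect the main obstacle to be expository rather than technical: one must make clear that the bound probability $Q_t$ inevitably decays in $t$ (a feature of the stochastic model, not a gap in the argument), so that the ``impossibility'' here is of the form ``with probability $Q_t$ the region is still uncleaned at time $t$'' rather than an almost-sure statement; keeping $Q_t$ as large as possible is precisely what the parameter-selection discussion of Section~\ref{p-selection} addresses.
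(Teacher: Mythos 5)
Your proposal follows the same overall route as the paper's proof: both reduce the claim to Theorem~\ref{theorem.DynamicProbLower.LowerBound}, demand that the per-step growth term $\left\lfloor 2(1-\delta)p\sqrt{2(S_t-k)-1}\right\rfloor$ offset the $k$ cleaned tiles, and then iterate the per-step guarantee to obtain the probability $Q_t=\prod_i q_i$ of Definition~\ref{def.DynamicProbLower.qt}. Where you genuinely diverge is in the treatment of $\delta$. The paper derives the threshold $S_0>\left\lfloor \frac{k^2}{8(1-\delta)^2p^2}+k+\frac12\right\rfloor$ and then invokes monotonicity in $\delta$ to state the condition at $\delta=0$, leaving the $\delta$ that actually enters $Q_t$ unspecified (and note that literally taking $\delta=0$ would force every $q_t$, hence $Q_t$, to vanish; the paper only remarks afterwards that one should evaluate $Q_t$ with an ``appropriate'' $\delta$). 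You instead extract from the integrality of $S_0$ and $k$ a strictly positive margin, define $\delta^{\star}=1-\frac{k}{2p\sqrt{2(S_0-k)-1}}\in(0,1)$, and run the argument with any $\delta\in(0,\delta^{\star}]$, which makes $Q_t$ strictly positive and the probabilistic statement non-vacuous; you also observe that non-strict monotonicity $S_{t+1}\ge S_t$ (i.e. floor $\ge k$) already suffices for the conclusion $\Pr[S_t\ge S_0]\ge Q_t$, whereas the paper insists on strict growth. Your version is thus a tightened, self-contained form of the paper's argument: what the paper buys with its $\delta=0$ limit is the cleanest possible threshold expression $S_0>\left\lfloor \frac{k^2}{8p^2}+k+\frac12\right\rfloor$, and what you buy with $\delta^{\star}$ is an explicit, positive guarantee probability attached to exactly that threshold, together with the correct reading of the theorem's ``$=Q_t$'' as a lower bound that unavoidably decays in $t$.
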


\begin{proof}
Firstly, we shell require that the contaminated region's size increases between each time step, guaranteeing us that the contaminated region's size will keep on growing, and thus impossible to be cleaned. Therefore we want that in each time step $t$ the size of the contaminated region will be bigger than the previous one~-~i.e.~: 
$$ S_{t+1}-S_{t}>0 $$
Using Theorem~\ref{theorem.DynamicProbLower.LowerBound} we know that~:
$$ S_{t+1} \geq S_{t} - k + \left\lfloor 2 \cdot \left( 1-\delta \right) p \cdot \sqrt{2 \cdot (S_{t} - k) - 1} \right\rfloor $$
and therefore we shell require that~:
$$ \left\lfloor 2 \cdot \left( 1-\delta \right) p \cdot \sqrt{2 \cdot (S_{t} - k) - 1} \right\rfloor -k >0 $$
Choosing, without loss of generality, $t=0$ and after some arithmetics, we see that~:
$$
S_0 > \left\lfloor \frac{k^2}{8\left(1-\delta \right)^2p^2}+k+\frac{1}{2} \right\rfloor 
$$
As $S_0$ as a function of $\delta$ is monotonically increasing and tends to infinity as $\delta$ tends to 1, we can lower bound $S_0$ with $\delta =0$, therefore~:
\begin{equation}
\label{eq.DynamicProbLower.impossible}
S_0 > \left\lfloor \frac{k^2}{8\cdot p^2}+k+\frac{1}{2} \right\rfloor 
\end{equation}
We would like this process to continue for $t$ time steps, thus applying the same method for all time steps and using Def.~\ref{def.DynamicProbLower.qt}, we get that~:
$$\forall{t} Pr\left[ S_{t} \geq S_0 \right] = Q_t$$.
\qed
\end{proof}

Notice that Theorem~\ref{theorem.DynamicProbLower.impossibility} produces two results~-~the first one is the minimal initial region's size $S_0$ which guarantees that the cleaning agents \emph{will not} be able to successfully accomplish the cleaning process and second one is the corresponding probability $Q_t$ in which this $S_0$ can be guaranteed. Also notice that in order to evaluate $Q_t$, one should use the appropriate $S_0$ and $\delta$.

\begin{figure}
\centering
\subfigure[The minimal initial region's size $S_0$ as a function of the number of agents $k$ for various probabilities.] 
{
    \label{subfig.imp.result.s0}
    \includegraphics[height=8cm,width=9cm]{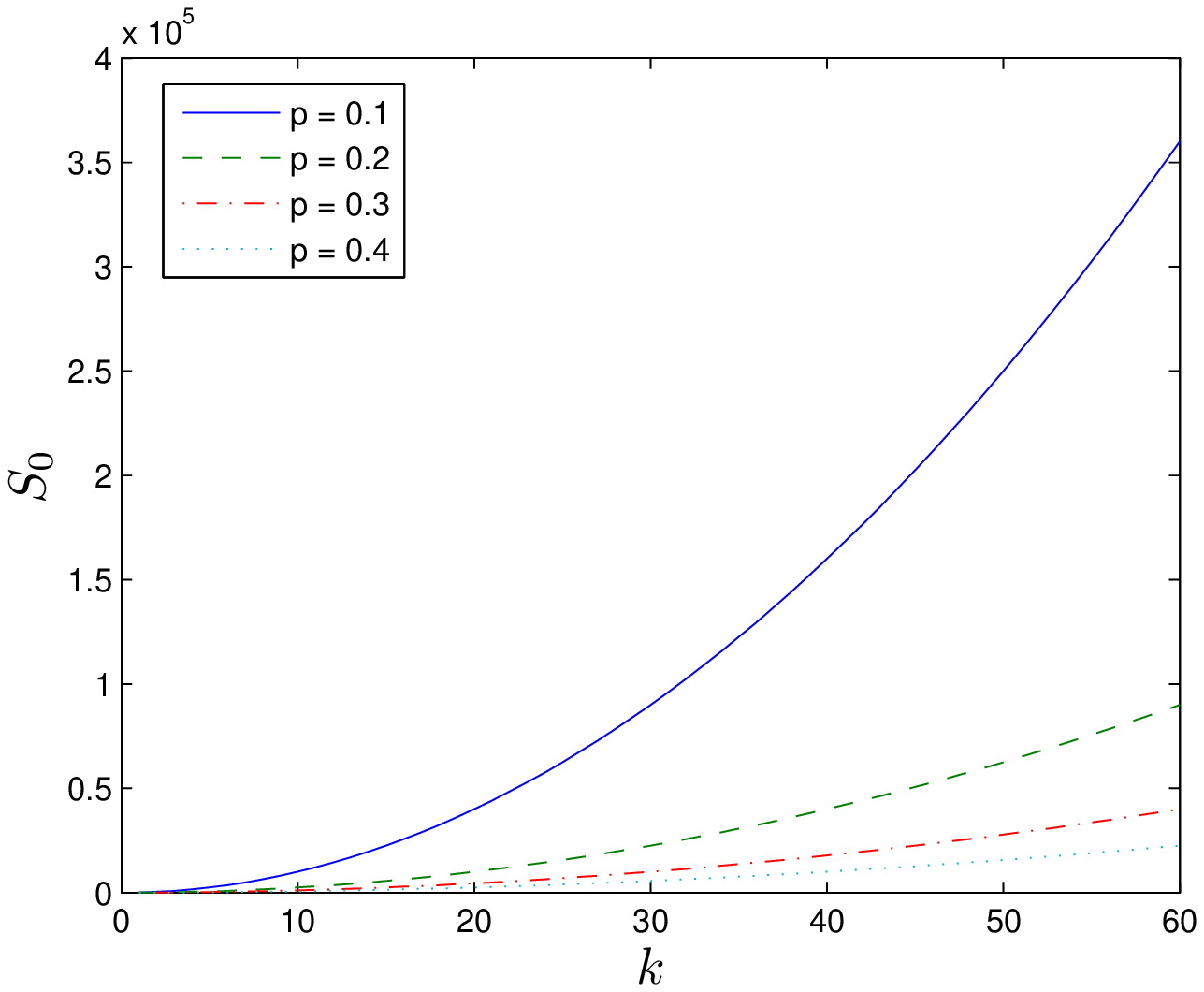}
}
\vspace{0.1cm}
\subfigure[The guaranteed probability $Q_t$ as a function of the number of agents $k$ for various probabilities.] 
{
    \label{subfig.imp.result.qt}
    \includegraphics[height=8cm,width=9cm]{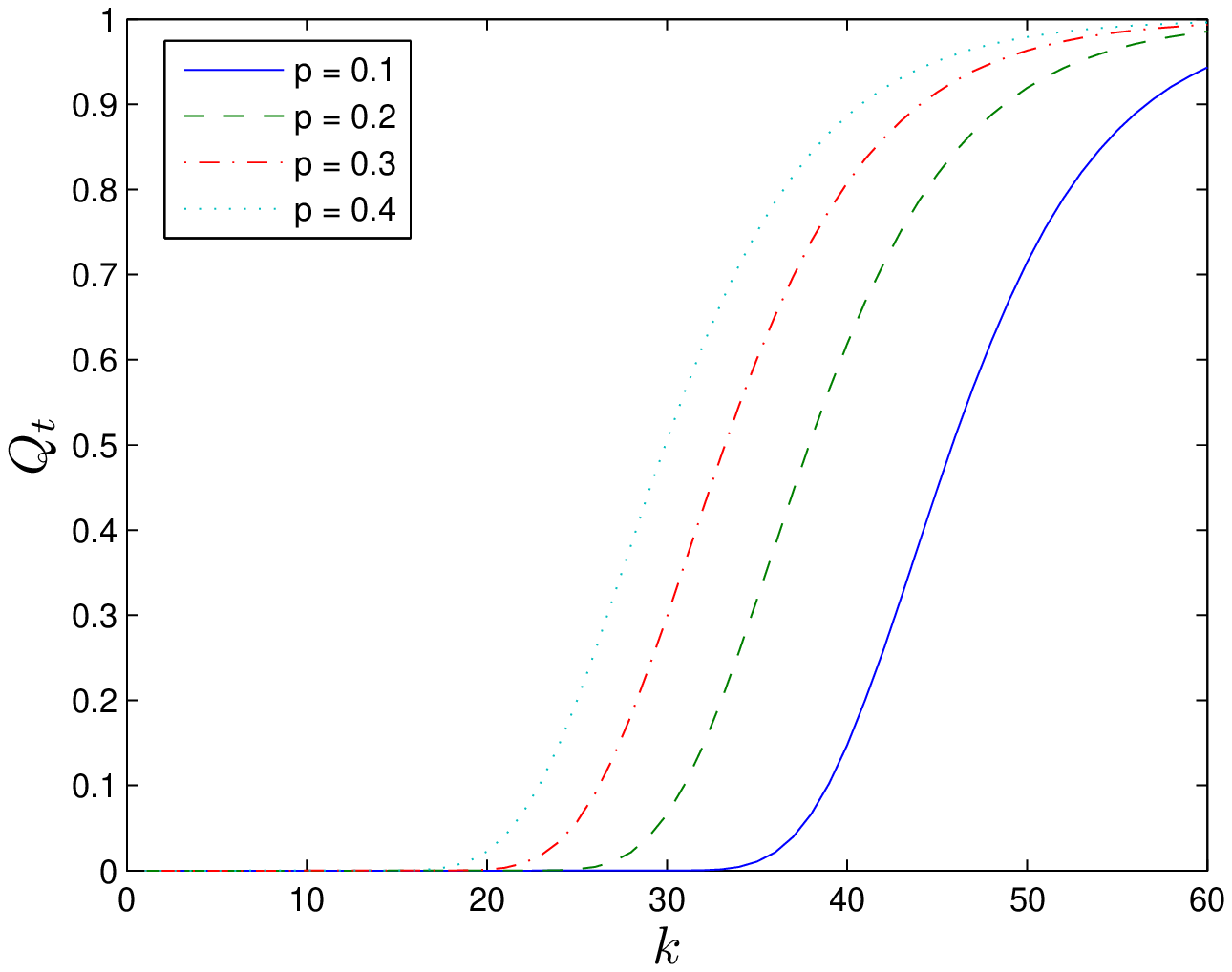}
}
\caption{An illustration of the impossibility result as presented in Theorem~\ref{theorem.DynamicProbLower.impossibility}. In \subref{subfig.imp.result.s0} we can see the minimal initial region's size $S_0$ for number of agents $k=[1..60]$ and spreading probability $p=[0.1,0.2,0.3,0.4]$. In \subref{subfig.imp.result.qt} we can see the corresponding probability $Q_t$.}
\label{fig.imp.results} 
\end{figure}

Interestingly, the results demonstrated in Figure.~\ref{fig.imp.results} of the impossibility result as presented in Theorem~\ref{theorem.DynamicProbLower.impossibility}, as the number of cleaning agents increases the probability which we can guarantee the minimal initial region's size $S_0$ also increases~-~in other words, although as the number of agents increases, the corresponding minimal initial region's size also increases and the probability in which we can guarantee that the agents will not be able to successfully clean the region increases as well.

\section{Experimental Results}
\label{e-results}

In previous work \cite{new-IJRR} a cleaning protocol for a group of K agents collaboratively cleaning an expending region was developed, called \sweep. The performance of the algorithm was analyzed in~\cite{new-IJRR}, both analytically and experimentally. We had implemented the algorithm for a revised environment - where stochastic changes take place, as defined throughout this paper. We have conducted an extensive observational analysis of the performance of this algorithm.
Exhaustive simulations were carried out, examining the cleaning activity of the protocol for various combinations of parameters~--~namely, number of agents, spreading probability (or spreading time in the deterministic model) and geometric features of the contaminated region. All the results were averaged over at least 1000 deferent runnings in order to get a statistical significance. In the deterministic model we average the results over the all the possible starting positions of the agents. Notice that, in order to minimize the running time, all running were stopped after some significant time and we consider these runnings as failure~--~i.e. these results are not included in the average calculation and not counted in the success percentage.

\begin{figure}
\centering
\subfigure[Average $T_{SUCCESS}$ as a function of the number of agents $k$] 
{
    \label{subfig.p0.02.t2k}
    \includegraphics[height=8cm,width=9cm]{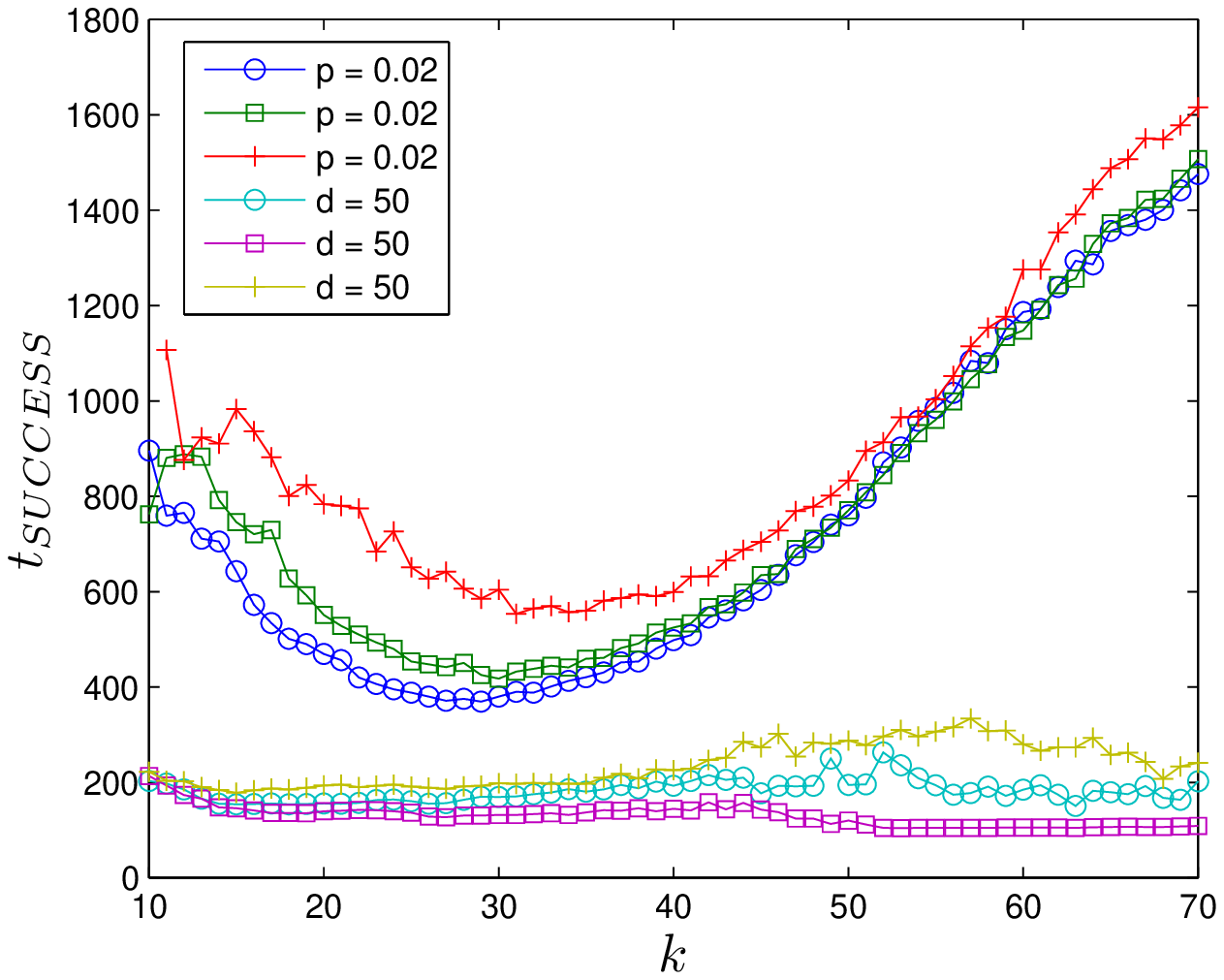}
}
\vspace{0.1cm}
\subfigure[Success percentage as a function of the number of agents $k$] 
{
    \label{subfig.p0.02.suc2k}
    \includegraphics[height=8cm,width=9cm]{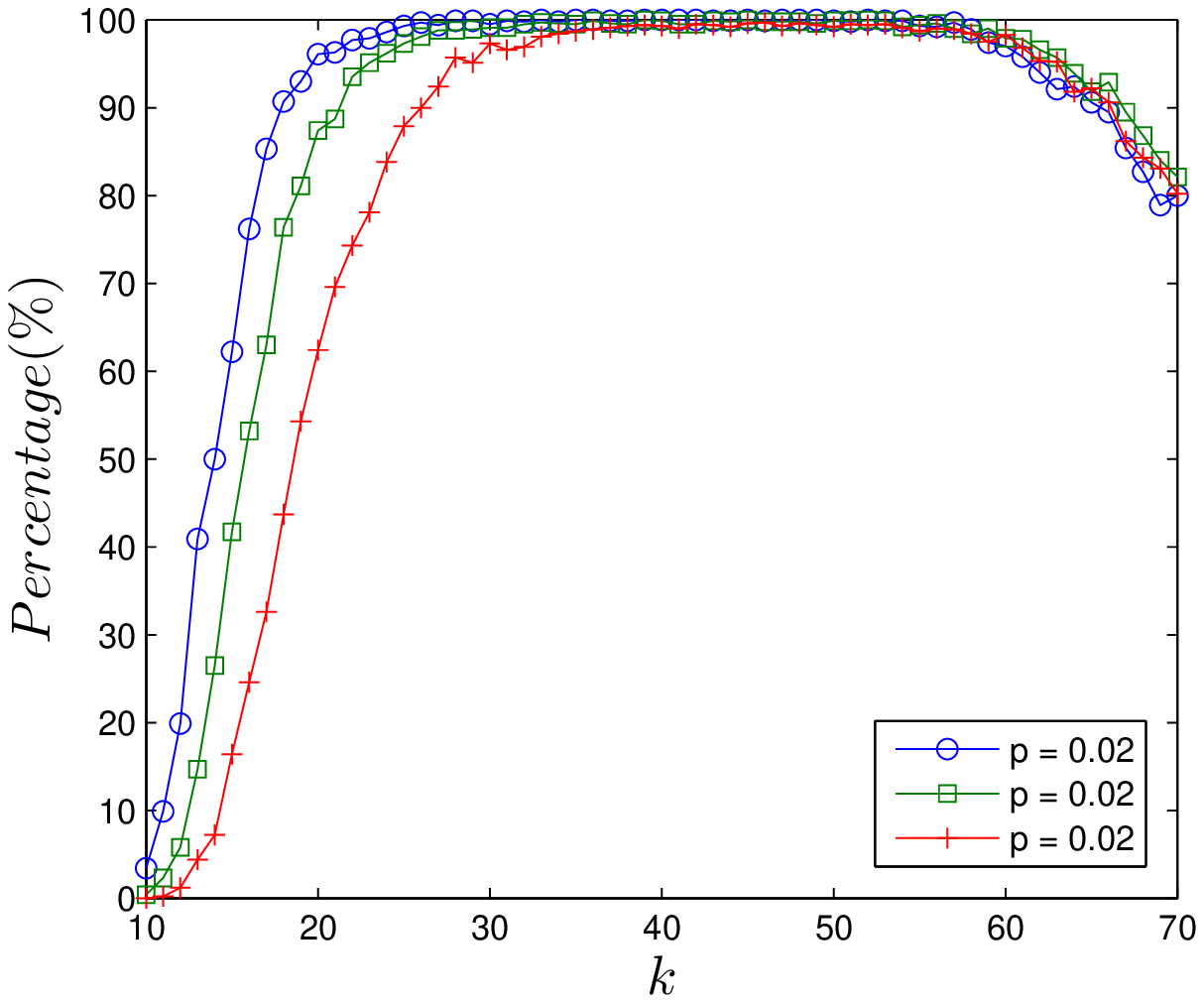}
}
\caption{Experimental results for various number of agents for spheric and squared contaminated region with starting size $S_0=500$ and with spreading probability of $p=0.02$ (notice that all the running were stopped after 3000 time steps). In \subref{subfig.p0.02.t2k} can see the results compared to the deterministic model results (with $d= \frac{1}{p}$ and 
in \subref{subfig.p0.02.suc2k} the success percentage}
\label{fig.results.p0.02} 
\end{figure}

Some of the experimental results are presented in Figure~\ref{fig.results.p0.02} comparing the probabilistic model and the deterministic model over three deferent shapes (circle, square and cross) with range of number of agents. Notice the interesting phenomenon, where adding more agents may cause to an increase in the cleaning time due to the delay caused by the agents synchronization in the \sweep protocol.

\section{Conclusions}
\label{conclu}
In this work we set the foundations of the stochastic model for the \emph{Cooperative Cleaning} problem and introduce, for the first time, the basic definitions describe this problem. We present two lower bounds on the contaminated region's size and on the cleaning time under the limitation of this probabilistic model and demonstrate an impossibility result on the number of agents which are essential for a successful completion cleaning a contaminated region.

One of the ways these results could be further enhanced would involve analyzing the transition process between several possible "states" of the system, as a Markov process. Once analyzing the process as a \emph{Markov's Chain}, we can get the stationary distribution of the process i.e. the probability to get to each one of the ending states (totally clean or impossible to clean).

It is also interesting to mention the similarity of this work to recent works done in the field of influence models in social networks. For example, in~\cite{PAN_AAAI2011} the authors demonstrate that a probabilistic local rule can efficiently simulate the spread of ideas in a social network. Combining this result with our work can generate a unique approach for analyzing dynamics of information flow in social networks.

\bibliography{Regev}

\end{document}